\documentclass[10pt, conference, letterpaper]{IEEEtran}
\usepackage{cite}

\ifCLASSINFOpdf
\else
\fi
\usepackage{algorithm}
\usepackage[noend]{algpseudocode}
\usepackage{amsmath}
\usepackage{amssymb}
\usepackage{amsthm}
\usepackage{xcolor}
\usepackage{graphicx}
\usepackage{physics}
\usepackage{enumitem}

\newtheorem{theorem}{Theorem}
\theoremstyle{remark}
\newtheorem{remark}{Remark}

\newtheorem{proposition}{Proposition}

\begin{document}
%
\title{Utility-Based Path Selection and Configuration in Quantum Networks via Layered Shortest Paths}

\author{\IEEEauthorblockN{Leonardo Bacciottini, Subhransu Maji, Don Towsley, Gayane Vardoyan}
\IEEEauthorblockA{Manning College of Information and Computer Sciences\\
University of Massachusetts Amherst\\
Email: \{lbacciottini,smaji,dtowsley,gvardoyan\}@umass.edu}
}

%





\maketitle

\begin{abstract}A path in a quantum network is a chain of repeaters that distributes entanglement between two users. Selecting a path requires balancing the rate and quality (e.g., fidelity) of the delivered entanglement, but these quantities, unlike standard routing metrics, compose non-additively. The problem is compounded by link-level configuration choices (e.g., distillation rounds or emitter brightness tuning), each trading rate against fidelity, so that a path's performance depends jointly on its route and its per-link settings.
We cast this joint path selection and configuration problem as a shortest-path computation on a layered graph whose layers track discretized end-to-end fidelity. A single run returns the full rate–fidelity Pareto frontier, from which the path maximizing any nondecreasing utility function of rate and fidelity can be selected. We prove that for certain utility functions (including the secret key rate of BB84), the method is a fully polynomial-time approximation scheme, returning a near-optimal path within a specified tolerance. We further characterize exactly when cheaper scalarization-based routing suffices: it is optimal for utility functions with convex fidelity profiles, but can be arbitrarily suboptimal otherwise (e.g., for step-like, sigmoidal utilities), whereas the layered method remains reliable in all cases.
\end{abstract}



%
\IEEEpeerreviewmaketitle

\section{Introduction}
Quantum networks distribute entanglement to support
applications such as provably secure communication~\cite{2014.TheoCompSc.Bennett-Brassard.BB84,1991.PRL.Ekert.QCrypt}, quantum-enhanced sensing~\cite{2012.PRL.Gottesman-Croke.QRepTelescope,khabiboulline2019quantum,2014.NatPhys.Komar-Lukin.QuantClocks,giovannetti2011advances}, verifiable blind quantum computation~\cite{fitzsimons2017unconditionally}, and distributed quantum computation~\cite{jiang2007distributed} with up to exponential speedups compared to its conventional counterpart.
Quantum network performance depends critically on how entanglement is routed through chains of lossy and noisy quantum channels.
Unlike
standard classical routing metrics, however, the two principal measures
of an entanglement path---the rate and quality (e.g., fidelity) of the delivered
end-to-end states---compose non-additively and generally trade off
against one another.

Optimizing either quantity alone can therefore lead to a poor
application-level solution. A high-rate path may produce states whose
fidelity is too low to be useful, while a high-fidelity path may
distribute entanglement too slowly. Application objectives such as the
secret key rate of a quantum key distribution (QKD) protocol depend jointly on both quantities and are better
described by a \emph{quantum utility function}~\cite{vardoyan2023quantum}. Selecting a
path according to such a utility allows the routing decision to reflect
the requirements of the application being executed, rather than optimizing a
single intermediate performance measure such as rate, fidelity, or hop count.

The routing decision is further complicated by configurable
rate--fidelity tradeoffs at individual links. For example, the
bright-state population in single-click entanglement generation protocols \cite{cabrillo1999creation}, or the number of
entanglement distillation rounds can be adjusted to trade generation
rate for state quality (see more discussion in \S~\ref{sec:model}). We refer to the available configuration choices
on a link as its \emph{operating points}. Consequently, the performance
of an end-to-end connection depends jointly on the selected physical
route and the operating point chosen on every traversed link. 

We study this joint path-selection and link-configuration problem for
bipartite entanglement distribution,
where the target is an ideal Bell state to be shared between two nodes in the network. Given a pair of nodes \(s\) and
\(t\), and a nondecreasing utility function \(U(R,F)\) of end-to-end
rate $R$ and fidelity $F$, our goal is to select an \(s\)--\(t\) path and one
operating point on each link on the path to maximize the resulting
utility. Our formulation builds on the quantum network utility
maximization framework~\cite{vardoyan2023quantum}, but focuses on path selection
and explicitly incorporates discrete per-link configuration choices.

For clarity of presentation, we develop our formulation under the assumption of Werner states~\cite{werner1989quantum} at the elementary link level. 
Werner states are commonly used to model worst-case Pauli noise in quantum networks~\cite{nielsen2010quantum}, thus enabling studies of performance lower-bounds assuming Pauli channels. 
A Werner state can be described in terms of its parameter $W$, which is linearly related to its fidelity $F$ (see \S~\ref{sec:model}).

\if{false}
Our key observation is that the adopted quantum path-composition rules
become additive after a suitable reparameterization. Under the
one-shot entanglement distribution model \lb{This is unclear at this point}, inverse rates add along a path, while Werner
parameters multiply under entanglement swapping (see \S~\ref{sec:model} for details). Defining
$(X,Y)=(1/R, -\ln W)$, where $W$ \lb{I would avoid going in such a detail in the intro. Additionally, I don't think we should commit to inverse rates here} is the Werner parameter of a state generated at the link level, therefore converts both quantities into additive path metrics. We exploit this structure by constructing a layered graph whose layers
discretize the cumulative Werner-parameter coordinate \(Y\), while its
edge costs accumulate inverse rate \(X\) \lb{how about simply "its edge costs track rate", so we are agnostic at this level w.r.t. the rate composition rule}. Running Dijkstra's algorithm~\cite{dljkstra1959note}
on this graph computes, for every node and fidelity layer, the path
with the smallest inverse rate \lb{"the highest-rate path"}. At the destination, these paths form a
discretized rate--fidelity frontier that can be evaluated under the
desired utility function.
\fi

Our key observation is that both rate and fidelity become additive properties after a suitable reparameterization. We exploit this structure though a layered graph whose layers discretize the end-to-end fidelity, while its
edge costs track the corresponding rate. Running Dijkstra's algorithm~\cite{dljkstra1959note}
on this graph computes the highest-rate path for every node and fidelity layer.
This construction yields the \emph{quantum layered shortest-path}
(QLSP) algorithm. In contrast to methods that collapse rate and
fidelity into a single edge cost, QLSP retains the end-to-end fidelity
of the state explicitly and can therefore recover Pareto-optimal solutions
that are not supported by any weighted-sum scalarization of functions of rate and fidelity. Moreover, given a node $s$, a
single run computes the discretized frontier for \emph{every} network node $t$ and
can be reused for \textit{any} nondecreasing utility function. We
analyze the approximation introduced by fidelity discretization and
provide polynomial-time approximation guarantees under certain assumptions (see Theorem~\ref{theorem:approximation} in \S~\ref{sec:methods}). 

\if{false}
We also characterize when the cheaper weighted-sum approach is
sufficient. For utilities of the form $U(R,W)=R\,g(W)$, 

define the fidelity profile $\varphi(y)=g(e^{-y})$. We prove that if \(\varphi\) is convex, then some utility-optimal solution is supported: it minimizes $\alpha X+\beta Y$
for an appropriate pair of nonnegative weights $(\alpha,\beta)\geq0$ (see Theorem~\ref{thm:main} in \S~\ref{sec:methods}). Consequently, a
weighted-sum procedure that enumerates all supported vertices recovers
the exact optimum.
\fi

We also establish a sufficient condition under which a weighted-sum scalarization recovers the utility-optimal route and link configuration (Theorem~\ref{thm:main}, \S~\ref{sec:methods}). This result motivates a computationally cheaper baseline that is provably optimal whenever this condition holds.
This condition holds for 
the asymptotic secret key rate of the BB84 QKD protocol \cite{2014.TheoCompSc.Bennett-Brassard.BB84}, as well as a utility based on entanglement negativity~\cite{vidal2002computable}. However, nonconvex fidelity profiles,
such as sigmoid functions or those induced by hard fidelity requirements, can have
Pareto-optimal but unsupported maximizers that no weighted-sum
scalarization can recover. QLSP continues to retain such solutions
because it approximates the full rate--fidelity frontier rather than
only its supported convex-hull vertices. 

Our main contributions are as follows:
\begin{itemize}[noitemsep,topsep=0pt,leftmargin=*]

    \item We introduce QLSP, a layered shortest-path algorithm that solves the joint path selection and configuration problem
    by computing the rate--fidelity Pareto frontier from a given node $s$ to every node $t$ in a single
    shortest-path computation.

    \item We analyze the complexity and approximation
    quality of QLSP, and establish polynomial-time approximation
    guarantees under the assumptions stated in
    \S~\ref{sec:methods}.

    \item We characterize when a linear scalarization is exact, and propose
    a cheaper algorithm that exploits this property.
    
\end{itemize}
 We evaluate QLSP on both synthetic and real-world network
    topologies, including heterogeneous architectures. 
    The experiments show that in the considered cases there are Pareto-optimal 
    candidate solutions that no linear scalarization can find. 
    We also observe that QLSP outperforms all other considered baselines, including those that optimize only the end-to-end fidelity or rate.
    
\if{false}\begin{itemize}[noitemsep,topsep=0pt,leftmargin=*]
    \item We introduce an efficient approximation algorithm to solve the joint path selection and link configuration problem. Our algorithm is based on Dijkstra's algorithm deployed on a layered graph of a given quantum network with arbitrary topology. The layered graph is constructed to capture a sufficiently large number 
    quantum-utility eventualities. We call this the \emph{quantum layered shortest path} (QLSP) algorithm, and show that its runtime is polynomial in the size of the quantum network, the number of operating points, and the quality of the approximation; 
    \item We provide a detailed analysis of the QLSP algorithm's approximation quality;
    \item As an independent contribution, we show that for utility functions that have convex fidelity profiles, deploying Dijkstra's algorithm on the original quantum network graph is optimal for appropriate choices of weights combining said performance measures used as edge costs.
\end{itemize}
\fi

\section{Related Work}
\label{sec:related}

We organize prior work based on treatment of the rate--fidelity tradeoff. A comprehensive taxonomy of entanglement routing appears in the recent survey of Abane et al.~\cite{abane2025entanglement}.

\paragraph{Rate- and fidelity-aware quantum routing}
Early work on repeater-network path selection ranks paths using additive per-link costs derived from simulated link throughputs~\cite{2013_vanMeter_path_selection}, an approach representative of the edge-based baselines considered in \S~\ref{sec:methods}. Subsequent methods optimize end-to-end rate or network-wide throughput, including single-path, multipath, and concurrent-flow formulations~\cite{2012_DiFranco_OptimalPath,caleffi2017optimal,2016.ArXiv.Schoute-Wehner.PerfQnetroute,pant2019routing,shi2020concurrent,chakraborty2020entanglement,ghaderibaneh2022efficient,vardoyan2024bipartite}. Fidelity is typically ignored or imposed as a hard constraint. For example, Q-PATH/Q-LEAP jointly selects a path and per-link purification rounds subject to an end-to-end fidelity threshold~\cite{a10}, while later work jointly optimizes routing and purification to maximize throughput under a fidelity requirement~\cite{zhao2022e2e,xiao2024purification,victora2023entanglement}. Related scheduling methods allocate resources to satisfy rate and fidelity requirements on fixed paths~\cite{SWarxivQoS}. Such fidelity-constrained objectives are special cases of our framework, represented by utilities such as
$U(R,F)=R\;\mathbf{1}[F\ge F_{\mathrm{req}}]$, where $R$ is the rate of entanglement generation, $F$ is the fidelity to the desired entangled state, and $F_{\mathrm{req}}$ is a pre-specified fidelity threshold imposed on each distributed state.

\paragraph{Utility-based routing}
Our objective builds on the QNUM framework of Vardoyan et al.~\cite{vardoyan2023quantum}, which adapts classical network utility maximization~\cite{NUM1,NUM2,NUM3} to entanglement distribution over fixed routes, using either centralized~\cite{vardoyan2023quantum} or distributed~\cite{panigrahy2025framework} optimization. Most relevant, Kar and Mukhopadhyay~\cite{kar2026utility} jointly optimize routes and continuous rate--fidelity allocations through a mixed-integer convex formulation. They obtain exact solutions for negativity-based utilities in a high-rate regime, and otherwise use a convex relaxation. 
Using a discrete set of rate--fidelity configurations, Zhang et al.~\cite{zhang2025linkconfig} jointly optimize path selection, per-link configuration, and purification for \emph{multiple} rate demands under a hard end-to-end fidelity constraint, solved with a greedy shortest-path heuristic refined by Bayesian optimization. Our setting considers a single pair of end nodes wishing to share entanglement, but allows arbitrary monotone utility functions and comes with approximation guarantees unlike theirs.

\paragraph{Multi-criteria shortest paths in classical networking}
Routing under two additive metrics is a classical problem in graph theory. In particular, finding a minimum-cost path subject to a delay constraint is NP-hard~\cite{wang1996quality}, admits fully polynomial-time approximation schemes (FPTASs) based on rounding-and-scaling dynamic programs~\cite{hassin1992approximation,lorenz2001simple}, and is often addressed in practice using Lagrangian relaxation, most notably the LARAC algorithm~\cite{juttner2001lagrange}. LARAC efficiently searches over the multiplier space, but it inherits the duality gap associated with scalarization; moreover, the number of supported solutions it may need to explore can be superpolynomial~\cite{carstensen1983complexity}.
This also renders techniques that maintain the pareto-frontier explicitly more challenging  (e.g., Coutinho et al.~\cite{coutinho2024fidelitycurves}), as the size of the frontier is not bounded \emph{a priori}.
QLSP adapts the rounding-and-scaling framework to quantum networks through the reparameterization of rate and fidelity which makes the quantum composition rules additive: rates compose harmonically (under certain assumptions, see \S~\ref{sec:model}), while fidelities compose multiplicatively. 
It further incorporates discrete per-link operating points, supports arbitrary monotone utilities, and provides an exact characterization of optimality (Theorem~\ref{thm:main}).

\section{Model}
\label{sec:model}
\if{false}
\begin{figure}
    \centering
    \includegraphics[width=0.98\linewidth]{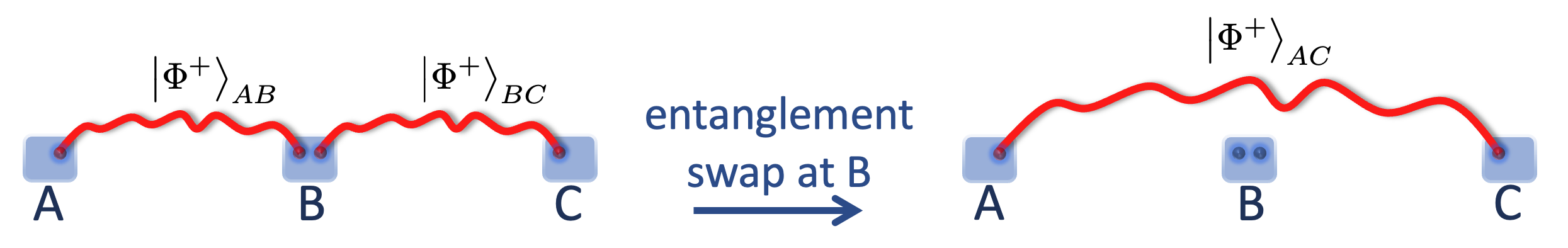}
    \vspace{-1em}
    \caption{To entangle node $A$ and $C$'s qubits, link-level entanglement is first generated between each node and an intermediate node $B$. $B$---acting as a quantum repeater---then carries out a BSM (entanglement swap) on its two locally-held qubits, resulting in $A$-$C$ entanglement.}
    \label{fig:swap}
\end{figure}
\fi
We consider first-generation quantum networks~\cite{MAT2015Qrepeaters} which distribute entanglement to end nodes first by generating link-level entanglement (LLE) between intermediate, physically-adjacent network nodes, called quantum repeaters. Quantum repeaters then ``fuse'' these shorter-distance entangled states into end-to-end (e2e) entanglement via Bell-state measurements (BSMs). 
This process is also known as \emph{entanglement swapping}.
LLE generation (LLEG) is an inherently probabilistic process that degrades exponentially with link length in optical fiber: the success probability $p_{\text{gen}} \propto \exp{-L/L_{\text{att}}}$, where $L$ is link length and $L_{\text{att}}$ is the fiber attenuation length, usually taken to be $22$ km. Unless specified otherwise, we take all link lengths $L$ to be in units of km. If the architecture allows it, distillation protocols~\cite{bennett1996purification, deutsch1996quantum} can also be used at any stage to probabilistically to reduce the number of entangled states and improve quality.

\paragraph*{Quantum noise model}
Realistic quantum systems produce non-ideal quantum states that further degrade due to imperfect quantum storage, manipulation (gates), and measurement. Fidelity is a measure of closeness between quantum states (or gates) and their ideal realizations. We assume in this work that LLEG on link $l$ produces a Werner state
\begin{align}
\rho_l = W_l\ketbra{\Psi^-}+\frac{1-W_l}{4}I_4
\label{eq:werner}
\end{align}
with probability $p_l$, where $W_l\in[0,1]$ is the state's Werner parameter, $\ket{\Psi^-} = (\ket{01}-\ket{10})/\sqrt{2}$ is the desired Bell state, and $I_4$ is the $4\times 4$ identity matrix. The fidelity of $\rho_l$ to $\ket{\Psi^-}$ is then $F_l = (3W_l+1)/4$, and for the state to be considered entangled it is necessary that ${F_l >0.5}$ (equivalently, ${W_l>1/3}$). While in reality LLEG can produce states that are not Werner, we assume their twirling~\cite{Bennett96} into the form (\ref{eq:werner}) to simplify analysis. Namely, Werner states have the convenient property that entanglement swapping on a set of $n$ Werner states $\rho_l$ yields another Werner state with parameter $W=\prod_{l=1}^nW_l$.

\paragraph*{Operating points}
Physical mechanisms for entanglement generation commonly allow generation rate
to be traded for state quality.  This occurs, for example, when tuning the
bright-state population in single-click LLEG~\cite{cabrillo1999creation}, the pump power in
spontaneous parametric down-conversion (SPDC)~\cite{kwiat1999ultrabright}, or
the number of rounds in an entanglement distillation protocol.  Each available
configuration therefore gives a different expected LLEG rate and Werner
parameter.  We call one such rate--quality pair $(R,W)$ an \emph{operating
point}.  

For single-click entanglement generation, the
bright-state population $\alpha\in(0,1)$ controls the rate--fidelity tradeoff.  In the regime of
small link transmissivity $\eta$, the fidelity and success probability are
\begin{equation}
  F(\alpha)=1-\alpha,
  \qquad
  p_{\mathrm{gen}}(\alpha)=2\eta(L)\alpha.\label{eq:single-click-tradeoff}
\end{equation}
Assuming a heralding station placed exactly midway between two nodes separated by $L$ km, the ``link'' is defined as the node-to-heralding station segment
of length $L/2$ km, yielding
\begin{equation}
  \eta(L)=c e^{-L/(2L_{\mathrm{att}})},\label{eq:single-click-transmission}
\end{equation}
where $0<c<1$ collects coupling, conversion, detector, and other efficiencies
not due to fiber attenuation.  If LLEG is attempted at repetition rate
$R_{\mathrm{rep}}$, a setting $\alpha$ maps to the QLSP operating point
\begin{equation}
  R(\alpha)=R_{\mathrm{rep}}p_{\mathrm{gen}}(\alpha),
  \qquad
  W(\alpha)=(4F(\alpha)-1)/3.\label{eq:single-click-operating-point}
\end{equation}
Thus, increasing $\alpha$ raises the generation rate while lowering the
Werner parameter.  Although $\alpha$ is continuous, selecting $A$ admissible
settings $\alpha_1,\ldots,\alpha_A$ produces the finite menu
  $\bigl\{\bigl(R(\alpha_a),W(\alpha_a)\bigr)\bigr\}_{a=1}^{A}$
used by QLSP.

Entanglement distillation produces the same algorithmic interface through a
different physical tradeoff: it consumes multiple lower-quality pairs to
produce fewer pairs of higher quality.  Consider the BBPSSW
protocol~\cite{bennett1996purification}, which distills two identical Werner states
into one.  Suppose that LLEG supplies raw states with Werner parameter $W_0$
at expected rate $R_0$, and allow at most $n$ nested distillation rounds.  Let
$(R_k,W_k)$ denote the expected rate and Werner parameter after $k$ rounds, so
that $k=0$ is the undistilled raw pair.  Under ideal local operations and perfect
storage of quantum states, the
success probability for the next BBPSSW round is
\begin{equation}
  P_{\mathrm{succ}}(W_k)=(1+W_k^2)/2,\label{eq:bbpssw-success}
\end{equation}
and the output Werner parameter and expected rate obey~\cite{dur1999quantum}
\begin{equation}
  W_{k+1}=\frac{2W_k(1+2W_k)}{3(1+W_k^2)},
  \qquad
  R_{k+1}=\frac{P_{\mathrm{succ}}(W_k)}{2}R_k.\label{eq:bbpssw-recurrence}
\end{equation}
Choosing between zero and $n$ nested rounds therefore gives the finite menu
  $\bigl\{(R_k,W_k)\bigr\}_{k=0}^{n}$.
Unlike single-click tuning, the number of rounds is already a discrete
protocol choice, so this entire menu is represented directly.

The single-click scheme and BBPSSW distillation thus have different physical
machinery but expose the same QLSP input: a finite collection of feasible
$(R,W)$ pairs.  Other physical or protocol controls can be incorporated in
the same way.  Sampling a continuous tradeoff more finely enriches the set of
available configurations, at the cost of additional computation; this
physical-menu discretization is distinct from the algorithmic approximation
introduced in \S~\ref{sec:methods}.

\paragraph*{Utility functions} The QLSP algorithm can accommodate arbitrary quantum utility functions $U(R,F)$---where $F$ is the fidelity of the average e2e Werner state distributed to two network nodes and $R$ is the e2e generation rate---provided that $U$ is monotonically increasing in its inputs. The utility function captures the complex ways in which users or quantum applications balance rate and fidelity. As an example, the asymptotic secret key rate (SKR) of the BB84 quantum QKD protocol, when carried out with Werner states of fidelity $F$ distributed at rate $R$, reads
\begin{align}
    U_{\text{BB84}}(R,F) = \max\Big\{0,R\left(1-2h\left(\frac{2(1-F)}{3}\right)\right)\Big\},\label{eq:bb84}
\end{align}
where $h(\cdot)$ is the binary entropy function.
Utility functions can combine any function of rate and an entanglement monotone, e.g., negativity, which for Werner states simplifies to
\begin{align}
    U_{\textrm{neg}}(R,F) = \max\Big\{0,R\left(F-\frac{1}{2}\right)\Big\}.
    \label{eq:neg}
\end{align}
Another commonly used one is the fidelity-threshold utility,
\begin{align}
    U_{\textrm{th}}(R,F) = R\mathbf{1}[F \geq F_{\textrm{req}}]
    \label{eq:th}
\end{align}
which maximizes the generation rate subject to a fidelity requirement
$F_{\mathrm{req}}$.
It may also be desirable to apply transformations to these utilities, such as
a logarithm, to promote fairness among coexisting network
flows~\cite{vardoyan2023quantum}. 
Path selection under such nonlinear objectives is even more challenging when network links have their own configurable rate--fidelity tradeoffs, since optimizing each link according to a local objective does not in general produce an e2e optimum. 

These utilities can equivalently be expressed in terms of the Werner-state
parameter $W$ using the relation $F=(3W+1)/4$. In the remainder of the
paper, we also use this parameterization because it simplifies the analysis.

\paragraph*{Problem statement}

Consider a quantum network represented by a graph \({G=(\mathcal{V},\mathcal{E})}\),
where vertices $v\in\mathcal{V}$ represent nodes (e.g., repeaters and end nodes), and undirected edges $e\in\mathcal{E}$ represent physical network links.
Each edge \(e\in \mathcal{E}\) may admit one or more operating points \(a\in\mathcal{A}_e\). Each operating point is associated with a rate-Werner parameter pair $(R_{e,a}, W_{e,a})$, where \(R_{e,a}>0\) is the average rate at which LLEG produces a new entangled pair, and \(W_{e,a}\in[0,1]\) is the corresponding Werner parameter. 

Given nodes \(s\) and \(t\),
producing an entangled state between them requires selecting a routing solution $\pi = (p, a)$, where:
\begin{enumerate}
    \item $p=(e_1, e_2,\dots,e_n)$ denotes a path from $s$ to $t$, represented as a sequence of edges $e_k \in \mathcal{E}$; and 
    \item $a = (a_1, a_2, \dots, a_n)$ denotes the corresponding operating points, i.e.,  $a_k \in \mathcal{A}_{e_k}$ is the operating point on edge $e_k$.
\end{enumerate}

Each routing solution $\pi=(p,a)$ induces an e2e rate
$R_{\text{e2e}}(\pi)$, representing the average number of entangled states
produced between $s$ and $t$ per unit time, and an e2e Werner parameter
$W_{\text{e2e}}(\pi)$, capturing their quality. Under the models adopted in
this paper, they compose as
\begin{equation}
    R_{\text{e2e}}(\pi)
    = \left(\sum_k \frac{1}{R_{e_k,a_k}}\right)^{-1},
    ~~
    W_{\text{e2e}}(\pi)
    = \prod_k W_{e_k,a_k}.
    \label{eq:path_composition}
\end{equation}
$R_{\text{e2e}}$ is the reciprocal of the expected e2e latency and is exact
when entanglement swaps are performed sequentially along the path. A
bottleneck model would instead use
$R_{\text{e2e}}(\pi)=\min_k R_{e_k,a_k}$. We use the first relation throughout,
although our framework extends to any additive rate model. The second relation ($W_{\text{e2e}}$)
follows from Werner-state composition. Depolarizing gate noise can be included
through additional multiplicative factors; memory decoherence can be included
only when represented by a constant depolarizing term.

Given nodes $s$ and $t$, and a utility function $U(R,W)$ of rate $R$ and Werner parameter $W$, our goal is to find a routing solution $\pi=(p,a)$ that jointly selects and configures an $s$--$t$ path to maximize its end-to-end utility:
\begin{equation}
    \pi^{\star} = \arg\max_{\pi} U(R_{\text{e2e}}(\pi), W_{\text{e2e}}(\pi)).
\label{eq:opt_problem}
\end{equation}

We note that it is possible to use time multiplexing or multi-path routing to obtain solutions that use multiple paths or configurations to get a better objective than~\eqref{eq:opt_problem}, but these techniques are out of the scope of this work.

\section{Methods}
\label{sec:methods}

In this section, we formally present the QLSP algorithm for approximately solving \eqref{eq:opt_problem} and analyze its correctness, computational complexity, and approximation guarantee. We then introduce several baselines that we evaluate against QLSP in \S~\ref{sec:numer}. Finally, in Theorem~\ref{thm:main}, we show that one of these baselines---the \emph{surrogate sweep} algorithm---is optimal for quantum utility functions with convex fidelity profiles.

\subsection{Quantum layered shortest-path (QLSP) formulation}
\label{sec:qlsp}
Because quantum-network utility functions may depend on both e2e rate and fidelity, directly optimizing either quantity alone does not generally yield an optimal solution.
A path with the highest e2e rate, for instance, might yield zero utility if its e2e fidelity is too low (e.g., $\leq 1/2$ for $U_{\text{NEG}}$).
Our formulation (Algorithm~\ref{alg:layered-graph}) instead maintains a discretized Pareto frontier over the achievable rate--fidelity tradeoffs. Specifically, for each node and each discretized value of the e2e Werner parameter, we maintain the best e2e rate among paths terminating at that node. This construction can be viewed as introducing multiple layers for each node, with each layer corresponding to a discretized Werner-parameter value. We then show that computing shortest paths in this layered graph recovers the discretized Pareto frontier, which contains an approximately optimal solution, and we quantify the approximation error.

\paragraph{Path composition rule} We assume that, compatibly with \eqref{eq:path_composition}, extending a path by edge $e$ operating at point $a\in\mathcal{A}$ changes the path's effective rate $R$ and Werner parameter $W$ according to:
\begin{equation}
\frac{1}{R'}=\frac{1}{R}+\frac{1}{R_{e,a}},
\qquad
W'=W\,W_{e,a}.
\end{equation}

It is convenient to reparameterize these quantities as: ${X\equiv1/R}$, and ${Y\equiv-\ln W}$. Under this change of variables, the update becomes additive: 
$X' = X + x_{e,a}$, and $Y' = Y + y_{e,a}$,
where $x_{e,a} := 1/R_{e,a}$, and $y_{e,a} := -\ln W_{e,a}$.

\paragraph{Layered graph construction}
Choose a discretization width \(\Delta>0\) and let the Werner parameter bins be indexed by $k=0,1,\dots,K$. Bin \(k\) represents cumulative fidelity-loss $Y$ approximately equal to \(k\Delta\), corresponding to Werner parameter $W_k \approx e^{-k\Delta}$.
Given a Werner-parameter threshold $W_{\min}$, and $K$ can be set as $K = \lceil \ln(1/W_{\min})/\Delta \rceil$.
For example, one may choose $W_{\min}>1/3$ when optimizing negativity,
since states below this threshold have zero negativity. The value
$W_{\min}$ is not intrinsic to the construction; it simply truncates
the layered search space. 
We construct a layered graph \(G^\Delta=(\mathcal{V}^\Delta,\mathcal{E}^\Delta)\) as follows. The layered node set is
\[
\mathcal{V}^\Delta = \{(v,k) : v\in \mathcal{V},\; k\in\{0,\dots,K\}\}.
\]
Thus, each physical node \(v\) is replicated across \(K+1\) fidelity layers (equivalently, Werner-parameter layers, since there is a one-to-one correspondence between them). For each physical edge \(e=(u,v)\in \mathcal{E}\), each operating point \(a\in\mathcal{A}_e\), and each layer \(k\), define
$k' = \left\lceil \frac{k\Delta + y_{e,a}}{\Delta} \right\rceil$.
If \(k'\le K\), then we add a directed layered edge $(u,k)\to (v,k')$
with nonnegative cost
$
c\big((u,k),(v,k')\big)=x_{e,a}=1/R_{e,a}.
$
If the physical graph is undirected, we add the reverse layered edge as well. We show an example of how the layered graph is constructed in Fig.~\ref{fig:layered}.

\begin{figure}
    \centering
     \includegraphics[width=.7\linewidth]{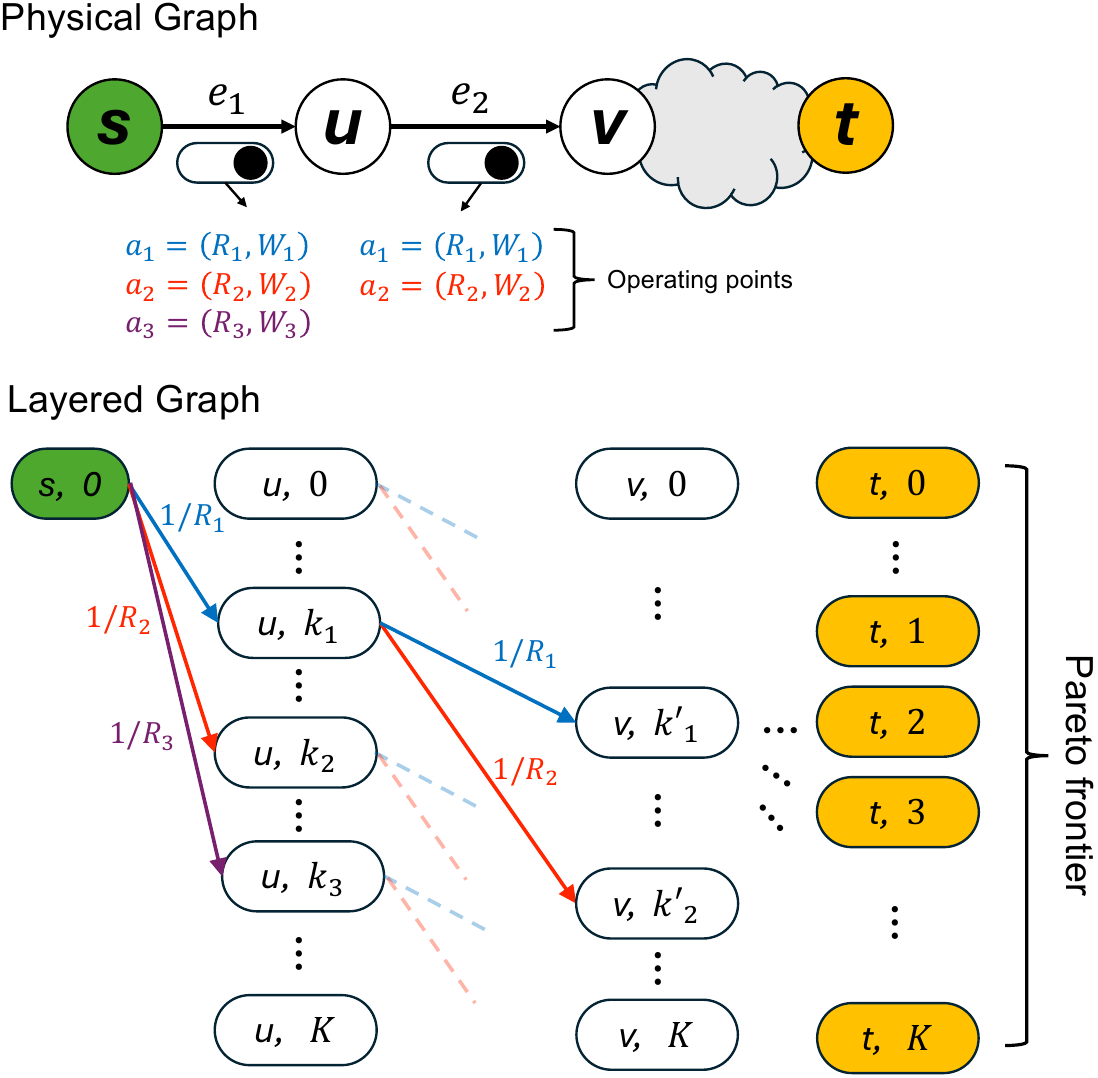}
       \vspace{-.5em}
    \caption{\textbf{Example of a physical and its associated layered graph.} In the example, the edge $e_1$ has three operating points $a_1, a_2, a_3$. In the layered graph, these correspond to three  edges from $(s,0)$ to $(u,k_1)$, $(u,k_2)$, $(u,k_3)$, where $k_i = \lceil -\ln W_i / \Delta \rceil$ with edge cost $1/R_i$ respectively. The edge $e_2$ between $u$--$v$ has instead two operating points, and thus each $(u,k)$ vertices in the layered graph have two outgoing edges corresponding to the offsets $k' = \lceil k - \ln W_i/\Delta \rceil$ and weights $1/R_i$. This construction applies to all edges and operating points of the physical graph. QLSP finds the shortest path from $(s, 0)$ to all $(t, k)$ nodes, with $0 \leq k \leq K$. In so doing, it finds the highest-rate path for all feasible end-to-end fidelities, thus the Pareto frontier.}
    \label{fig:layered}
\end{figure}

\paragraph{Interpretation}
A path $\pi=(p,a)$ in the layered graph corresponds to a physical path $p$ together with a choice $a$ of operating point on each traversed edge. The path cost in the layered graph is the cumulative inverse-rate: $X = \sum_{e\in p} 1/{R_e}$, and $Y \approx \sum_{e\in p} -\ln W_e$. Hence, if \(d(v,k)\) denotes the shortest-path distance from \((s,0)\) to \((v,k)\) in the layered graph, then $R(v,k) = 1/d(v,k)$, and $W(v,k) \approx e^{-k\Delta}$.

\paragraph{Optimization}
We run Dijkstra's algorithm on the layered graph \(G^\Delta\) from the source  $(s,0)$ which corresponds to the empty path with $(X,Y)=(0,0)$. At the destination node \(t\), each reachable layer \(k\) yields a candidate path $p_k$, and the corresponding link-level operating points: the rate $1/d(t,k)$ is already exact, while the \emph{exact} fidelity is computed as: $Y(p_k)=\sum_{e\in p_k} y_{e,a_e}$, giving $U_k = U\left(1/{d(t,k)},e^{-Y(p_k)}\right)$. The final output is the layer $k^\star = \arg\max_{k:\, d(t,k)<\infty} U_k$ together with the path $p_{k^\star}$, and utility $U_{k^\star}$.

\subsection{Algorithmic analysis}
\paragraph{Correctness}
Because all layered-edge costs are nonnegative ($1/R_{e,a}\ge 0$), 
Dijkstra's algorithm computes, for each layered state \((v,k)\), the minimum cumulative inverse-rate among all paths from \((s,0)\) to \((v,k)\). The layer index records the discretized cumulative Werner parameter, so the algorithm computes the best achievable rate for each Werner parameter bin and then selects the bin maximizing the e2e utility.

\paragraph{Complexity}\label{par:complexity}
If the physical graph has \(n=|\mathcal{V}|\) nodes and \(m=|\mathcal{E}|\) edges, and each edge has at most \(A\) operating points, then the layered graph has
$|\mathcal{V}^\Delta| = O(nK)$ nodes and $|\mathcal{E}^\Delta| = O(mKA)$ edges. Dijkstra's algorithm runs in $O\bigl(mKA\log(nK)\bigr)$ time using a standard binary heap, and uses $O(nK)$ memory up to storage of the layered edges.

\paragraph{Approximation quality}\label{par:approx-quality}
If the cumulative Werner parameter-loss variable \(Y=-\ln W\) is rounded after each edge extension using bin width \(\Delta\), then along any path of length at most \(L\), the total discretization error in \(Y\) is at most \(L\Delta\). Hence, the e2e Werner parameter satisfies $\widehat W \ge e^{-L\Delta}W$. Consider the specific case of negativity utility function (\ref{eq:neg}), which, when expressed in terms of $W$ reads $U(R, W) = R(3W-1)/4$ (we can omit the $\max$ under the assumption \(W_{\mathrm{e2e}}\ge W_{\min}>1/3\)), the approximation ratio for the negativity term \(g(W)=(3W-1)/4\) is bounded below by
\[
\frac{g(\widehat W)}{g(W)}
\ge
1-
\frac{3(1-e^{-L\Delta})}{3W-1}
\ge
1-
\frac{3L\Delta}{3W_{\min}-1}.
\]
The derivation uses the following inequalities: $W \leq 1$ and $1-e^{-x} \leq x,$ for $x\geq0$.
Thus choosing $\Delta \le {\varepsilon(3W_{\min}-1)}/{3L}$ 
is sufficient to ensure a \((1-\varepsilon)\)-approximation from Werner parameter discretization. For fixed \(W_{\min}\) bounded away from \(1/3\), this implies that we can set $ K = O\left(L/\varepsilon\right)$ giving a runtime of
$
O\left(
mA\frac{L}{\varepsilon}
\log\left(\frac{nL}{\varepsilon}\right)
\right).
$
For simple paths, \(L\le n-1\), so the runtime is polynomial in the input size and \(1/\varepsilon\), and thus QLSP is an FPTAS for the negativity-based utility~(\ref{eq:neg}). The theorem below states this for general utilities. 

\algrenewcommand\algorithmicindent{0.5em} 
\begin{algorithm}[t]
\small
\caption{The QLSP algorithm for rate--Werner utility}
\begin{algorithmic}[1]
\Require Graph \({G=(\mathcal{V},\mathcal{E})}\), source \(s\), destination \(t\), fidelity bin width \(\Delta\), edge operating points \(\{(R_{e,a},W_{e,a})\}_{a\in\mathcal{A}_e}\).
\Ensure Approximate path maximizing \(U(R_{\mathrm{e2e}}, W_{\mathrm{e2e}})\)

\State Construct layered node set
\[
\mathcal{V}^\Delta = \{(v,k): v\in \mathcal{V},\; k=0,\dots,K\}.
\]

\For{each \(e=(u,v)\in \mathcal{E}\)}
    \For{each \(a\in\mathcal{A}_e\)}
        \State Compute
        $x_{e,a}=1/{R_{e,a}},\quad y_{e,a}=-\ln W_{e,a}.$
        \For{\(k=0,\dots,K\)}
            \State Set
            $
            k'=\left\lceil \frac{k\Delta+y_{e,a}}{\Delta}\right\rceil.
            $
            \If{\(k'\le K\)}
                \State Add layered edge \((u,k)\to(v,k')\) with cost \(x_{e,a}\)
                \If{\(G\) is undirected}
                    \State Add layered edge \((v,k)\to(u,k')\) with cost \(x_{e,a}\)
                \EndIf
            \EndIf
        \EndFor
    \EndFor
\EndFor

\State Initialize distances:
\[
d(v,k)\gets \infty \quad \text{for all }(v,k)\in \mathcal{V}^\Delta,
\qquad
d(s,0)\gets 0.
\]
\State Run Dijkstra's algorithm on \(G^\Delta\) from source \((s,0)\)

\For{each \(k\in\{0,\dots,K\}\) with \(d(t,k)<\infty\)}
    \State Recover the shortest path \(p_k\) to \((t,k)\)
    \State Compute the exact fidelity loss \(Y(p_k)=\sum_{e\in p_k} y_{e,a_e}\).
    \State Compute $U_k=U\left(1/{d(t,k)},e^{-Y(p_k)}\right)$.
\EndFor

\State Return $k^\star=\arg\max_{k:\, d(t,k)<\infty} U_k$, $p_{k^\star}$, and $U_{k^\star}$.

\end{algorithmic}
\label{alg:layered-graph}
\end{algorithm}

\begin{theorem}[Approximation guarantee]\label{theorem:approximation}
Suppose that $U$ is nondecreasing in rate and fidelity and that, for a given $\epsilon >0$, there exists some $\delta>0$ such that
\[
U(R,e^{-(Y+\delta)})
\geq
(1-\epsilon)U(R,e^{-Y})
\]
for all achievable $(R,Y)$ in the relevant domain. If QLSP uses layer width $\Delta\leq{\delta}/{L}$,
then it returns a solution $\widehat\pi$ satisfying $U(\widehat\pi)\geq(1-\epsilon)U(\pi^\star)$. The number of layers is therefore $K=O(L/\delta)$ for a fixed $W_{\min} > 0$. Thus, if $1/\delta$ is polynomial in
$1/\epsilon$ and since $L\leq |\mathcal V|-1$, QLSP is an FPTAS.
\end{theorem}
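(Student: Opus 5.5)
The argument follows an optimal solution into the layered graph and compares it with what Dijkstra returns in the same layer.

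Let $\pi^\star=(p^\star,a^\star)$ be an optimal routing solution, with at most $L$ edges, exact inverse rate $X^\star=\sum_{e\in p^\star}x_{e,a_e}$, and exact loss $Y^\star=\sum_{e\in p^\star}y_{e,a_e}$. We assume $W_{\mathrm{e2e}}(\pi^\star)\ge W_{\min}$, so that $\pi^\star$ lies in the relevant domain.

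\textbf{Step 1 (embedding the optimum).} Define the layer indices along $p^\star$ by $k_0=0$ and $k_{i}=\lceil (k_{i-1}\Delta+y_{e_i,a_i})/\Delta\rceil$. Induction on $i$ gives the sandwich
\[
\sum_{j\le i}y_{e_j,a_j}\;\le\; k_i\Delta\;<\;\sum_{j\le i}y_{e_j,a_j}+i\Delta .
\]
Each rounding adds less than $\Delta$, so $k_{n}\Delta< Y^\star+L\Delta\le Y^\star+\delta$. For the last layer $k^\dagger=k_n$ to exist, we need $k^\dagger\le K$. We take $K=\lceil(\ln(1/W_{\min})+\delta)/\Delta\rceil$, which is still $O(L/\delta)$ when $\Delta=\delta/L$ and $W_{\min}$ is fixed. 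With this choice, $\pi^\star$ traces a valid path from $(s,0)$ to $(t,k^\dagger)$ in $G^\Delta$ of cost $X^\star$.

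\textbf{Step 2 (the layer's shortest path).} By the correctness paragraph, Dijkstra returns a path $p_{k^\dagger}$ with $d(t,k^\dagger)\le X^\star$. Its rate therefore satisfies $\widehat R\ge R_{\mathrm{e2e}}(\pi^\star)$. By the lower half of the sandwich, applied to $p_{k^\dagger}$ itself, its exact loss satisfies $Y(p_{k^\dagger})\le k^\dagger\Delta<Y^\star+\delta$.

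\textbf{Step 3 (chaining the inequalities).} Monotonicity of $U$ in rate and fidelity, followed by the hypothesis at $(R^\star,Y^\star)$, gives
\[
U_{k^\dagger}=U\bigl(\widehat R,e^{-Y(p_{k^\dagger})}\bigr)\ge U\bigl(R^\star,e^{-(Y^\star+\delta)}\bigr)\ge(1-\epsilon)U(R^\star,e^{-Y^\star}).
\]
QLSP returns the maximizer over all layers using exact utilities, so $U(\widehat\pi)\ge U_{k^\dagger}\ge(1-\epsilon)U(\pi^\star)$.

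\textbf{Step 4 (complexity).} Substituting $K=O(L/\delta)$ into the complexity paragraph gives a runtime of $O\bigl(mA(L/\delta)\log(nL/\delta)\bigr)$. This is polynomial in the input size and in $1/\epsilon$ whenever $1/\delta$ is polynomial in $1/\epsilon$, using $L\le|\mathcal V|-1$.

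\textbf{Main obstacle.} The delicate point is that $p_{k^\dagger}$ and the path behind it may be non-simple, or may use more than $L$ edges. The argument avoids needing a length bound on $p_{k^\dagger}$, because its loss is bounded directly by its layer index, $Y(p_{k^\dagger})\le k^\dagger\Delta$. The bound $L$ is only needed on $\pi^\star$, which can be taken simple. The truncation at $K$ also needs care: any slack beyond $W_{\min}$ must be added to $K$ as in Step 1, or the bin containing $\pi^\star$ may be dropped.
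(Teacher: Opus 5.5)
Your proposal is correct and follows the same route as the paper's proof. Upward rounding on each of the at most $L$ edges of $\pi^\star$ costs at most $L\Delta\le\delta$ in fidelity loss, and monotonicity plus the $\delta$-hypothesis turns this into the $(1-\epsilon)$ guarantee. Your version is more careful than the paper's sketch in two ways. First, you compare against the specific layer $k^\dagger$ where $\pi^\star$ lands, so $d(t,k^\dagger)\le X^\star$ supplies the rate comparison the paper leaves implicit, and $Y(p_{k^\dagger})\le k^\dagger\Delta$ replaces the paper's loosely stated $Y_{\hat\pi}\le Y_{\pi^\star}+L\Delta$. Second, you enlarge $K$ so that layer $k^\dagger$ is not truncated, which the paper's choice $K=\lceil\ln(1/W_{\min})/\Delta\rceil$ does not by itself guarantee.
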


\begin{proof}
The proof follows from the fact the QLSP rounds the fidelity loss of each edge upward and $\pi^*$ uses at most $L$ edges. Thus $Y_{\hat{\pi}} \leq Y_{\pi^*} + L\Delta$. To ensure $Y_{\hat{\pi}} \leq Y_{\pi^*} + \delta$, it suffices to require $L\Delta \leq \delta$. Setting $\Delta = \delta/L$ and for a fixed $W_{\textrm{min}} > 0$ the number of layers $K=\lceil{\ln (1/W_{\textrm{min}})/\Delta}\rceil=O(L/\delta)$. 
To express this bound in terms of the approximation error $\epsilon$,
we must relate $\delta$ to $\epsilon$, which depends on the form of
the utility function. For the negativity utility, we derived that
$K=O(L/\epsilon)$ as $\delta = \Theta(\epsilon)$. Similarly, for the BB84 utility, one can show that
$K=O\left(L\log(1/\epsilon)/\epsilon\right)$. These bounds assume that
$W_{\min}$ is fixed and bounded away from the corresponding
zero-utility threshold. Since $L \leq |{\cal V}|-1$, the size of the layered graph is
polynomial in $K$ and the input size, and shortest paths can be computed
in polynomial time, QLSP is an FPTAS.
\end{proof}
Fidelity-threshold utilities require additional care because an
arbitrarily small decrease in fidelity can discontinuously reduce the
utility to zero. This can be handled in two ways.

\begin{proposition}[Threshold utility]
\label{cor:threshold}
Consider $U_{\mathrm{th}}(R,W)=R\mathbf{1}[W\geq W_{\mathrm{req}}]$, where $W_{\mathrm{req}}$ is a minimum required Werner parameter of the e2e state,
and let $B=-\ln W_{\mathrm{req}}$. Suppose an optimal solution
$\pi^\star$ satisfies
$Y_{\pi^\star}\leq B-\gamma$ for some $\gamma>0$. If QLSP uses
fidelity-loss layer width $\Delta\leq\gamma/L$, then it returns an
optimal solution for $U_{\mathrm{th}}$.
Without this margin assumption, a margin-free FPTAS can instead
discretize inverse rate $X=1/R$ while accumulating fidelity loss 
$Y=-\ln W$ exactly. For any $\epsilon>0$, this rate-layered variant
returns a feasible solution $\widehat{\pi}$ satisfying
$Y_{\widehat{\pi}}\leq B$ and
$X_{\widehat{\pi}}\leq(1+\epsilon)X_{\pi^\star}$, and hence
$U_{\mathrm{th}}(\widehat{\pi})
\geq U_{\mathrm{th}}(\pi^\star)/(1+\epsilon)
\geq(1-\epsilon)U_{\mathrm{th}}(\pi^\star)$.
\end{proposition}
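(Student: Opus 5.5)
The proposition has two independent parts, and I would prove them separately.

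\emph{Part 1 (margin case).} I would reuse the mechanism behind Theorem~\ref{theorem:approximation}. Let $\pi^\star$ have $L^\star\le L$ edges. Follow $\pi^\star$ edge by edge in the layered graph starting from $(s,0)$. By induction on the prefix length $j$, the layer index satisfies $k_j\Delta \le \sum_{i\le j} y_{e_i,a_i} + j\Delta$, since each ceiling adds less than $\Delta$. So $\pi^\star$ reaches $(t,k^\dagger)$ with
\[
k^\dagger\Delta\le Y_{\pi^\star}+L\Delta\le B-\gamma+\gamma= B .
\]
Because $B\le \ln(1/W_{\min})$ when $W_{\min}\le W_{\mathrm{req}}$, we get $k^\dagger\le K$ and the path is not truncated.

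Dijkstra then finds a path $p_{k^\dagger}$ to $(t,k^\dagger)$ with $d(t,k^\dagger)\le X_{\pi^\star}$. Its exact loss satisfies $Y(p_{k^\dagger})\le k^\dagger\Delta\le B$, because rounding only overestimates $Y$. Hence $p_{k^\dagger}$ is feasible and
\[
U_{k^\dagger}=1/d(t,k^\dagger)\ge U_{\mathrm{th}}(\pi^\star).
\]
QLSP picks the maximizing layer using exact fidelities, so its output is at least this good. Optimality of $\pi^\star$ then forces equality.

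\emph{Part 2 (rate-layered variant).} Here I swap roles, in the style of the classical restricted-shortest-path FPTAS~\cite{hassin1992approximation,lorenz2001simple}.
\begin{enumerate}
\item Guess a scale $X_0$ with $X_{\pi^\star}\le X_0\le 2X_{\pi^\star}$. One way is to try all powers of two between $\min x_{e,a}$ and $L\max x_{e,a}$, which is polynomially many in the bit length. An alternative is to try each value $\max_{e\in p} x_{e,a}$, which satisfies $X_{\pi^\star}/L\le\cdot\le X_{\pi^\star}$.
\item Set the rate bin width $\Delta_X=\epsilon X_0/(2L)$.
\item Build layers in rounded inverse rate $\lceil X/\Delta_X\rceil$, with at most $O(L/\epsilon)$ of them up to $(1+\epsilon)X_0$. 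At each (node, rate layer) state, keep the minimum exact $Y$. This is a DP over layers, since every edge cost is strictly positive and so increases the layer; alternatively run Dijkstra on $Y$, which is nonnegative.
\item Among states at $t$ with $Y\le B$, return the one with the smallest true $X$.
\end{enumerate}

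For the guarantee: $\pi^\star$'s rounded layer is at most $(X_{\pi^\star}+L\Delta_X)/\Delta_X$. The DP therefore finds a path in that layer or a lower one with $Y\le Y_{\pi^\star}\le B$. Its true $X$ is at most its rounded value, which gives
\[
X_{\widehat\pi}\le X_{\pi^\star}+L\Delta_X\le X_{\pi^\star}+\epsilon X_{\pi^\star}.
\]
This uses $X_0\le 2X_{\pi^\star}$, and it gives $X_{\widehat\pi}\le(1+\epsilon)X_{\pi^\star}$. Finally, $1/(1+\epsilon)\ge 1-\epsilon$ yields the utility bound.

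\emph{Expected obstacle.} The delicate step is making the rate layering polynomial, because $X$ has no a priori bound analogous to $\ln(1/W_{\min})$. That is exactly why the scale-guessing step is needed.

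A second subtlety is that taking the minimum $Y$ per state can yield non-simple walks. I would argue that removing cycles never increases $X$ or $Y$, so restricting attention to at most $L\le|\mathcal V|-1$ edges costs nothing.
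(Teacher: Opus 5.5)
Your proof is correct and follows essentially the route the paper intends. The paper gives no separate proof: it relies on the upward-rounding argument from the proof of Theorem~\ref{theorem:approximation} for the margin case, and on the rounding-and-scaling restricted-shortest-path FPTAS~\cite{hassin1992approximation,lorenz2001simple}, with $X$ rather than $Y$ discretized, for the margin-free case. Your write-up is more careful than that sketch, since it makes explicit that the Dijkstra path stored at $(t,k^\dagger)$ has exact loss at most $k^\dagger\Delta\le B$ and that $W_{\min}\le W_{\mathrm{req}}$ is needed to avoid truncation.

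One small slip: your ``alternative'' scale guess $\max_{e\in p}x_{e,a}$ only brackets $X_{\pi^\star}$ within a factor $L$, not $2$. Using it would need either $O(L^2/\epsilon)$ rate layers or an extra $O(\log L)$ doubling guesses. The powers-of-two option you list first works as stated.
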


\subsection{Baselines}\label{subsec:baselines}
We consider several baseline approaches. Each may suffer from a mismatch between the objective optimized during path selection and the true e2e utility, although the nature of this mismatch differs across methods.

\paragraph{Sum of edge-based costs}
We first define a local edge cost using one of the following choices:
\begin{itemize}
\item  Local utility: $c_e = \min_a 1/U(R_{e,a}, W^k_{e,a})$ (with $k>0$),
\item  Rate only: $c_e= \min_a 1/R_{e,a}$
\item  Werner only: $c_e= -\max_a \ln W_{e,a}$.
\end{itemize}
We then minimize the sum of edge costs over the path. For local utility, the exponent $k$ is set to the network diameter. We found this scheme to work better than simply setting $k=1$ in our experiments---larger values of $k$ emphasize fidelity over rate. After obtaining a path \(p\), we evaluate its true e2e utility using composition rules~(\ref{eq:path_composition}).
The operating point $a_e$ for each edge is determined locally based on the minimizer of the edge cost. 

\paragraph{Scalar surrogate sweep}
An alternative approach assigns each edge the scalarized cost
\begin{equation}\label{eq:scalarization}
c_e(\alpha, \beta) = \min_{a \in \mathcal{A}_e}\Bigl(\frac{\alpha}{R_{e,a}} - \beta \ln W_{e,a}\Bigr)
\end{equation}
where $(\alpha,\beta) \geq 0$ and then runs a standard shortest-path algorithm on the graph (note: the scalar $\alpha$ here is not to be confused with the bright-state population parameter introduced in \S~\ref{sec:model}).
The surrogate matches the rate and fidelity composition rules, but is based on a \textit{scalarization} of the true objective. It can also be
viewed as a Lagrangian relaxation of the constrained problem: $\min X_\textrm{e2e}$ \emph{subject to:} $Y_\textrm{e2e} \leq B$, which maximizes rate subject to a fidelity constraint. The parameters  \((\alpha,\beta)\) must be swept over a range of values, and the resulting path be evaluated using the true e2e utility. 
The algorithm returns the path with the highest utility among all solutions found.
In Theorem~\ref{thm:main}, we show that, for utilities with convex fidelity profiles,
enumerating all supported solutions through this sweep recovers a
globally optimal solution.

We implement this sweep using an adaptive binary search. First, the coordinates \(X=1/R\) and
\(Y=-\ln W\) are normalized by the median positive per-edge contributions
\(\mu_X\) and \(\mu_Y\), respectively. Each adaptive shortest-path query therefore
minimizes \(\alpha\widetilde X+\beta\widetilde Y\), where
\((\widetilde X,\widetilde Y)=(X/\mu_X,Y/\mu_Y)\). We then solve the \emph{rate only} and
\emph{Werner only} objective, corresponding to $(\alpha, \beta) = (1,0)$ and $(0,1)$ respectively. Given two distinct solutions
\(z_\ell=(\widetilde X_\ell,\widetilde Y_\ell)\) and
\(z_r=(\widetilde X_r,\widetilde Y_r)\), where
\(\widetilde X_\ell<\widetilde X_r\), the next shortest-path query uses the
separating weights
\((\alpha,\beta)=(\widetilde Y_\ell-\widetilde Y_r,
\widetilde X_r-\widetilde X_\ell)\), for which the two solutions have equal
scalarized cost. If the query finds a new supported solution, the two
resulting intervals are searched recursively; otherwise, that interval is
exhausted. The recursion stops when no new path is found or either normalized
coordinate difference is below \(10^{-9}\). Finally, every distinct feasible path identified by the sweep is
evaluated using the true e2e utility, and the path with the
highest utility is retained. Because the scalarized costs decompose
additively over edges, for any fixed $(\alpha,\beta)$, the operating
point of each selected edge can be chosen locally using
\eqref{eq:scalarization}.

\paragraph{Distance path}
This baseline decouples physical-route selection from link
configuration. It first selects an $s$--$t$ path that minimizes total the 
physical distance, $p_{\mathrm{dist}}^* = \arg\min_{p} \sum_{e\in p} L_e,$
where $L_e$ denotes the physical length of link $e$. It then runs QLSP
on the fixed path $p_{\mathrm{dist}}^*$ using all operating
points available on its links. As shown in \S~\ref{sec:numer}, this
separation is effective when the physical route can be selected largely independently of its configuration, e.g., when homogeneous links make the shortest-distance path dominant.

\subsection{Comparison}
Across the graphs and utility functions considered in our experiments
(\S~\ref{sec:numer}), QLSP substantially outperforms baselines based on
fixed edge costs. The surrogate sweep, however, matches QLSP when the
utility function satisfies certain conditions. We formalize this result
in Theorem~\ref{thm:main}, which shows that the surrogate sweep recovers
a globally optimal solution when the fidelity profile is convex.

\begin{theorem}[Empty duality gap for convex fidelity profiles]
\label{thm:main}
Assume the utility has the form $U(R,W)=R\cdot g(W)$, with $g$ nondecreasing. Define the \emph{fidelity profile} as $\varphi(y):=g\bigl(e^{-y}\bigr)$.
Assume $\varphi$ is \emph{convex} and some solution achieves
$U>0$. Let $\pi=(p,a)$ denote a physical $s$-$t$ path $p$ with $e_k$ denoting the traversed edge $k$ on the path with a choice of operating point $a_k \in {\cal A}_{e_k}$.
Then, on \emph{every} instance the maximum
of $U$ over all
routing solutions $\pi$ is attained by a solution that also minimizes the
scalarized cost
$\sum_{k}\Bigl(\frac{\alpha}{R_{e_k,a_k}}-\beta\ln W_{e_k,a_k}\Bigr)$
for some weights $(\alpha,\beta)\ge0$. Consequently the weighted-sum sweep returns the exactly optimal utility,
provided the sweep is fine enough to hit every breakpoint of the
piecewise-linear Lagrangian
$\phi(\lambda)=\min_{\pi}\sum_{k}\bigl(1/R_{e_k,a_k}-\lambda\ln
W_{e_k,a_k}\bigr)$. There is no Lagrangian duality gap.
Conversely, when $\varphi$ is non-convex (e.g.,
$g(W)=\mathbf 1[W\ge W_{\mathrm{req}}]$, where $W_{\mathrm{req}}$ is the minimum required value, saturating ramps, sigmoids), the
utility maximizer can be a Pareto-optimal solution that no
scalarization ever returns, and
the resulting gap can be an arbitrarily large fraction of the optimum.
\end{theorem}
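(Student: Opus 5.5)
Work in the additive coordinates $(X,Y)=(1/R,-\ln W)$, in which every routing solution $\pi$ is a point $z_\pi=(X_\pi,Y_\pi)$ in the finite set $Z=\{z_\pi\}$. The utility becomes
\[
U(\pi)=\frac{\varphi(Y_\pi)}{X_\pi}.
\]
Restrict attention to solutions with $U>0$, which is nonempty by assumption. For these, $\varphi(Y_\pi)>0$.

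\textbf{Key reduction.} Take an optimum $\pi^\star$ with value $U^\star>0$. Then $\pi^\star$ maximizes $\varphi(Y)/X$, which is equivalent to
\[
U^\star X_\pi-\varphi(Y_\pi)\ \ge\ 0 \quad\text{for all } \pi,
\]
with equality at $\pi^\star$. The map $(X,Y)\mapsto U^\star X-\varphi(Y)$ is concave, being linear minus convex. I cannot use it directly as a scalarization, so I linearize instead. Let $s\in\partial\varphi(Y^\star)$ be a subgradient at $Y^\star=Y_{\pi^\star}$; one exists since $\varphi$ is convex and finite. Since $g$ is nondecreasing and $e^{-y}$ is decreasing, $\varphi$ is nonincreasing, so $s\le 0$. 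By convexity,
\[
\varphi(Y)\ge \varphi(Y^\star)+s(Y-Y^\star).
\]
Define the linear function
\[
\ell(X,Y)=U^\star X-sY, \qquad \alpha=U^\star>0,\quad \beta=-s\ge 0.
\]
For any $\pi$,
\[
\ell(z_\pi)=U^\star X_\pi - sY_\pi \ \ge\ \varphi(Y_\pi)-sY_\pi \ \ge\ \varphi(Y^\star)-sY^\star=\ell(z_{\pi^\star}).
\]
Hence $\pi^\star$ minimizes $\alpha X+\beta Y$, which is exactly the scalarized cost $\sum_k\bigl(\alpha/R_{e_k,a_k}-\beta\ln W_{e_k,a_k}\bigr)$.

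I expect this to be the main obstacle: using a subgradient of a possibly nondifferentiable $\varphi$, and checking that the first inequality needs only $U(\pi)\le U^\star$. That inequality also holds for solutions with $U=0$, since then $\varphi(Y_\pi)\le 0\le U^\star X_\pi$. The boundary case $Y^\star=0$ needs a one-sided subgradient argument, or one notes that $\varphi$ is defined on $[0,\infty)$ and takes the right derivative there.

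\textbf{Sweep consequence.} Normalizing, the minimizer is a minimizer of $X+\lambda Y$ with $\lambda=\beta/\alpha$. If the minimizer is not unique, $\pi^\star$ lies on a face of the lower convex hull of $Z$. The piecewise-linear function $\phi(\lambda)$ has breakpoints exactly at the slopes of hull edges, and a sweep that hits all breakpoints enumerates every supported vertex. On a hull edge, $\varphi(Y)/X$ is maximized at an endpoint. To see this, parametrize the edge linearly: the utility has superlevel sets $\{\varphi(Y)\ge cX\}$, whose complements are convex, so the maximum of the ratio over a segment is attained at a vertex. Thus the sweep attains $U^\star$. Absence of a duality gap is a restatement of the same fact.

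\textbf{Converse.} Build a parallel-path instance with three single-edge $s$–$t$ routes of equal hop count. Choose $z_1=(1,0)$, $z_3=(0+\eta,\,2)$, and a middle point $z_2=(1-\eta',1)$ lying strictly above the segment $z_1z_3$, so $z_2$ is Pareto-optimal but unsupported. Take the threshold profile $g=\mathbf 1[W\ge e^{-1}]$. Then $U(z_3)=0$, $U(z_1)=1$, and $U(z_2)=1/(1-\eta')$. This only shows a gap of ratio $1/(1-\eta')$, which is bounded. To make the gap arbitrarily large, I instead push $z_1$ far out: take $z_1=(M,0)$, $z_2=(1,1)$, and $z_3=(\eta,2)$ with $\eta$ small enough that $z_2$ is above the chord. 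Scalarization returns only $z_1$ and $z_3$, giving utilities $1/M$ and $0$, while the optimum is $1$. The ratio tends to $0$ as $M\to\infty$. The same construction works with a steep sigmoid or saturating ramp approximating the step. Realizing these points as operating points on a single edge is immediate.
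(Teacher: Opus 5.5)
Your forward direction is correct, and it takes a different route from the paper. The paper argues geometrically: $U=\varphi(Y)/X$ is quasiconvex because its sublevel sets are epigraphs of $\varphi/c$. Its maximum over the convex hull of the achievable set therefore sits at an extreme point, which monotonicity places on the lower-left boundary, where every vertex is supported. You instead rewrite optimality as $U^\star X_\pi-\varphi(Y_\pi)\ge 0$ and linearize $\varphi$ at $Y^\star$ with a subgradient. This gives explicit weights $(\alpha,\beta)=(U^\star,-s)$ with $\alpha>0$, so the optimum lands directly in the $\phi(\lambda)$ parametrization with $\lambda=-s/U^\star$. It also shows that \emph{every} maximizer is supported, not just some. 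Finally, it avoids the paper's informal step that the hull maximizer ``can be chosen on the lower-left boundary.'' One small repair is needed at $Y^\star=0$: the right derivative there can be $-\infty$ (e.g.\ $\varphi(y)=1-\sqrt y$, which is convex and nonincreasing). Instead take $s=\min_{\pi:\,Y_\pi>0}(\varphi(Y_\pi)-\varphi(0))/Y_\pi\le 0$. It satisfies the subgradient inequality at every achievable point, which is all your chain uses.

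The genuine gap is in the converse, which the paper's proof does not argue at all; it supports it only through numerical examples. Your instance $z_1=(M,0)$, $z_2=(1,1)$, $z_3=(\eta,2)$ does not make $z_2$ unsupported for large $M$. The chord $z_1z_3$ crosses $Y=1$ at $X=(M+\eta)/2$, and $z_2$ is unsupported only if it lies strictly to the right of that crossing, i.e.\ $M+\eta<2$; shrinking $\eta$ cannot change this. Explicitly, for $M\ge 2-\eta$ the weights $(\alpha,\beta)=(1,\beta)$ with $1-\eta\le\beta\le M-1$ make $z_2$ a minimizer, so the sweep finds the optimum and there is no gap. More generally, suppose $Y_1<Y_2<Y_3$ and $X_3<X_1$, and let $t=(Y_2-Y_1)/(Y_3-Y_1)$. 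Then $z_2$ is unsupported only if $X_2>(1-t)X_1+tX_3\ge(1-t)X_1$, so the gap ratio $X_1/X_2$ stays below $1/(1-t)$. Your placement has $t=1/2$, which caps the gap at a factor of $2$. An unbounded gap requires the infeasible high-rate point to sit just past the threshold, so that $t\to 1$. Keep $z_1=(M,0)$, $z_2=(1,1)$ and $\varphi=\mathbf 1[y\le 1]$, but take $z_3=(\eta,1+\delta)$ with $\delta=1/(2M)$ and $\eta<1/2$. Supporting $z_2$ needs $\alpha>0$, since $\alpha=0$ forces $\beta=0$. With $\alpha=1$ it would then need both $\beta\le M-1$ and $\beta\ge(1-\eta)/\delta=2M(1-\eta)>M$, which is impossible. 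The sweep therefore returns only $z_1$ and $z_3$, with best utility $1/M$, against the optimum $U(z_2)=1$.
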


\begin{proof}
In the additive coordinates $(X,Y)=(1/R,-\ln W)$, the utility can be
written as $U(X,Y)=\varphi(Y)/X$. Convexity of $\varphi$ implies that
$U$ is quasiconvex, since each sublevel set
$\{(X,Y):U(X,Y)\leq c\}=\{(X,Y):X\geq\varphi(Y)/c\}$ is convex.
Therefore, some maximizer of $U$ over the convex hull of the achievable
set lies at an extreme point. Moreover, because $U$ is nonincreasing in
both $X$ and $Y$, this maximizer can be chosen on the lower-left
boundary of the convex hull. Every vertex on this boundary minimizes
$\alpha X+\beta Y$ for some $\alpha,\beta\geq0$, not both zero.
Consequently, enumerating all supported solutions by weighted-sum
scalarization recovers a globally optimal solution.
\end{proof}

\begin{remark}[Why $\varphi$ and not $g$]
Convexity is required in the additive coordinate $y=-\ln W$, because that is
the variable in which fidelity loss accumulates linearly along a path and in
which the scalarization is linear. Convexity of $g$ in $W$ (together with
monotonicity) is sufficient but not necessary: $g(W)=\sqrt W$ is concave in
$W$ yet has convex profile $\varphi(y)=e^{-y/2}$.
\end{remark}

\begin{remark}[Instances satisfying the hypothesis]
The negativity utility and the BB84 secret-key-rate utility both have
convex fidelity profiles over their positive-utility regions.
Consequently, for either utility, a scalarization sweep that enumerates
all supported solutions recovers a globally optimal solution. 
\end{remark}

\begin{remark}[Frontier computation and reuse]
\label{rem:frontier}
A single run of QLSP returns the entire
discretized rate--fidelity frontier $d(t,k)$, $k=0,\dots,K$, which can be re-evaluated under \emph{any} nondecreasing utility and for every destination
simultaneously. 
The scalarization sweep also admits a frontier interpretation: sweeping
$\alpha,\beta \geq 0$ 
enumerates the vertices of the lower convex hull of the achievable
$(R,W)$ set is also reusable. 
However, the hull omits the unsupported Pareto
points (Theorem~\ref{thm:main}); it is therefore sufficient only for
utilities with convex fidelity profiles, while the layered frontier
remains faithful for the non-convex ones (e.g., fidelity thresholds). 
The hull enumeration costs one shortest-path run per hull vertex, but the number of breakpoints of a parametric shortest path can be superpolynomial in the worst case~\cite{carstensen1983complexity}, whereas the layered run is a single computation, polynomial in the input size and $1/\varepsilon$.
\end{remark}

\begin{remark}
    Our analysis assumed a discrete set of operating points $\mathcal{A}$. In the Appendix, we propose a simple modification to our standard QLSP algorithm to accommodate continuous operating points; remarkably, provided a convexity condition on the per-link cost, one does not have to pay a $K^2$ penalty for the resulting (denser) layered graph. We show that the single-click scheme \eqref{eq:single-click-tradeoff}-\eqref{eq:single-click-operating-point} satisfies the convexity condition. Furthermore, we provide in the Appendix an argument  that access to a continuous set of operating points does not eliminate the duality gap.
\end{remark}

\begin{remark}
    Our setup and analysis assumed that $s$ and $t$ are allowed to choose a single network path. More generally, the nodes could multiplex between multiple network paths. In the Appendix, we show that multiplexing is not advantageous when the utility function is convex.
\end{remark}

\section{Numerical Evaluation}

\begin{figure*}
      \centering
      \includegraphics[width=\textwidth]{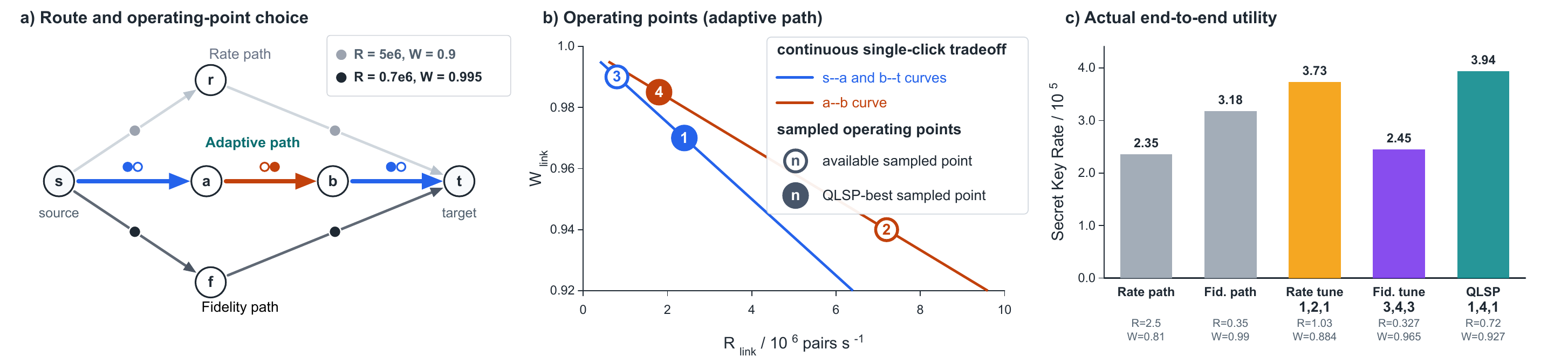}
      \vspace{-2em}
      \caption{
      \textbf{Illustration of our framework on a small network with route and operating-point choices.}
      (a) The source \(s\) and destination \(t\) are connected by a rate-seeking path (top), a fidelity-seeking path (bottom), and an adaptive path (middle) whose links have two operating points (the two circles above each edge).
      (b) The operating points on the adaptive path are drawn from linear tradeoff curves, compatible with the single-click scheme.
      Points \(1,2\) are rate-seeking points, while points \(3,4\) are fidelity-seeking points; filled markers denote the best points with respect to BB84 utility.
      (c) End-to-end SKR utility for five candidate solutions.
      Rate-seeking choices lose too much fidelity, while fidelity-seeking choices sacrifice too much rate.
      QLSP jointly optimizes the route and operating points, selecting a mixed operating points \(1,4,1\), and achieves the largest utility. Values shown in this plot are obtained by using an approximation tolerance $\epsilon = 10^{-4}$, as defined in \S~\ref{par:approx-quality}. LLEG rate and Werner parameters are calculated using the single-click scheme described in \S~\ref{sec:model}.}
      \label{fig:proof-of-concept-qnum}
      \vspace{-3mm}
  \end{figure*}

  \begin{figure}
      \centering
      \includegraphics[width=\linewidth]{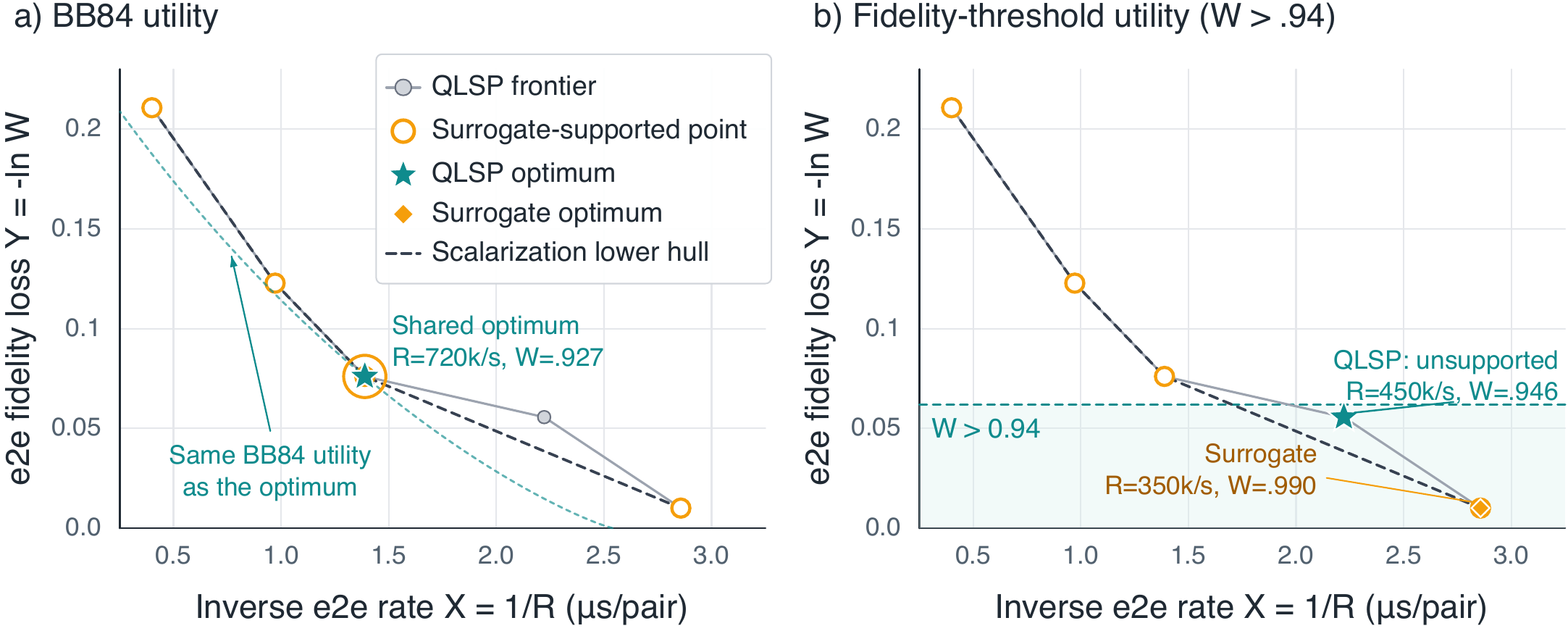}
      \caption{\textbf{Limits of scalarized routing under non-convex utilities.} The panels show the attainable end-to-end configurations in inverse-rate/fidelity-loss space; the black dashed line is the lower convex hull reachable through surrogate scalarization. (a) For BB84
      SKR, the optimum is supported, and the surrogate baseline matches QLSP. (b) Imposing the non-convex fidelity requirement $W>.94$ moves the optimum to an unsupported Pareto point. QLSP selects this configuration, whereas the surrogate is restricted to a sub-optimal,
      hull-supported solution.}
      \label{fig:proof-of-concept-nonconvex}
      \vspace{-4mm}
  \end{figure}
  
\label{sec:numer}
To evaluate QLSP, we employ two representative utility functions: the \emph{BB84 SKR} (\(U_{\text{BB84}}\)) as defined in~\eqref{eq:bb84}, and \emph{fidelity-threshold utility} ($U_{\mathrm{th}}$) as defined in~\eqref{eq:th}, where $W_{\mathrm{req}} = (4F_{\mathrm{req}}-1)/3$ is the threshold value for the Werner parameter. 
We set $W_{\mathrm{req}}=0.94$, corresponding to a $F_{\mathrm{req}}=0.955$, although the results are similar for other choices. These utilities illustrate both the convex-profile case (BB84) and the nonconvex case (fidelity threshold).

\subsection{Proof of Concept}
Fig.~\ref{fig:proof-of-concept-qnum} presents a simple example of the joint
decision solved by QLSP.  The source \(s\) and destination \(t\) are connected by three possible corridors.  The top corridor is a \emph{rate path}: it is fast
but comparatively noisy.  The bottom corridor is a \emph{fidelity path}: it is
relatively low-noise but slow.  The middle corridor is an \emph{adaptive path}: its links offer multiple operating points (two per link; Fig.~\ref{fig:proof-of-concept-qnum}b), so the optimizer must determine both the physical route and the configuration of each link.

The example illustrates two key effects. First, maximizing e2e utility can favor a route overlooked by optimizing rate or fidelity alone. A rate-only policy (e.g., \emph{rate only} baseline  from \S~\ref{subsec:baselines}), might select the rate path, while a fidelity-only policy (e.g., \emph{Werner only} baseline from \S~\ref{subsec:baselines}), might select the fidelity path. Neither maximizes $U_{\text{BB84}}$. Second, even after selecting the adaptive route, link operating points cannot be chosen independently. Configuring every adaptive link for high rate produces insufficient e2e fidelity, while configuring every link for high fidelity incurs excessive delay. Instead, the optimal solution is the mixed configuration $(1,4,1)$, which assigns a higher-fidelity operating point to the middle link and higher-rate operating points to the two side links.

Fig.~\ref{fig:proof-of-concept-nonconvex} uses the same network to show why
baselines that use scalar edge scores can be suboptimal. 
For the smooth BB84 utility, the scalarized surrogate and other baselines based on local policies can recover the same adaptive
configuration as QLSP. This is the favorable case: the optimum lies on a
supported portion of the rate--fidelity frontier.  For the threshold utility
\(U_{\textrm{th}}\) with $W_{\textrm{req}}=0.94$, however, the best solution is the mixed threshold-feasible configuration \((1,4,3)\): it provides just enough fidelity to satisfy the threshold while maximizing rate.
The baseline instead chooses the safer fidelity path and obtains a solution worse by \(22.2\%\).  The takeaway is not that this small topology is difficult, but rather that a nonconvex utility can make the optimal configuration unsupported and therefore inaccessible to scalarization-based methods. QLSP still recovers it because it explicitly maintains a discretized e2e rate--fidelity Pareto frontier.

\subsection{Comparison with Baselines}

\begin{figure*}
      \centering
      \includegraphics[width=\textwidth]{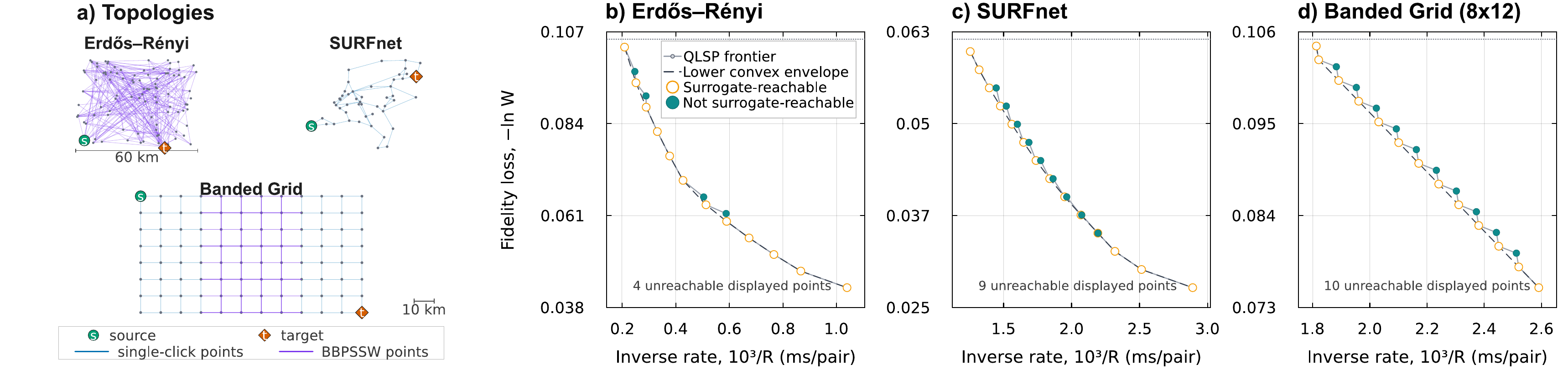}
      \vspace{-2em}
      \caption{
        \textbf{(a)} Three topologies used in the evaluation. \textbf{(b,c,d)} The rate--fidelity frontier returned by QLSP on the three topologies. In all cases, there are frontier points (green circles) that are inside the lower convex hull and cannot thus be found by the surrogate sweep. For single click operating points, we use $A=21$.}
      \label{fig:frontiers}
      \vspace{-2mm}
\end{figure*}

\begin{figure*}
    \centering
    \includegraphics[width=\textwidth]{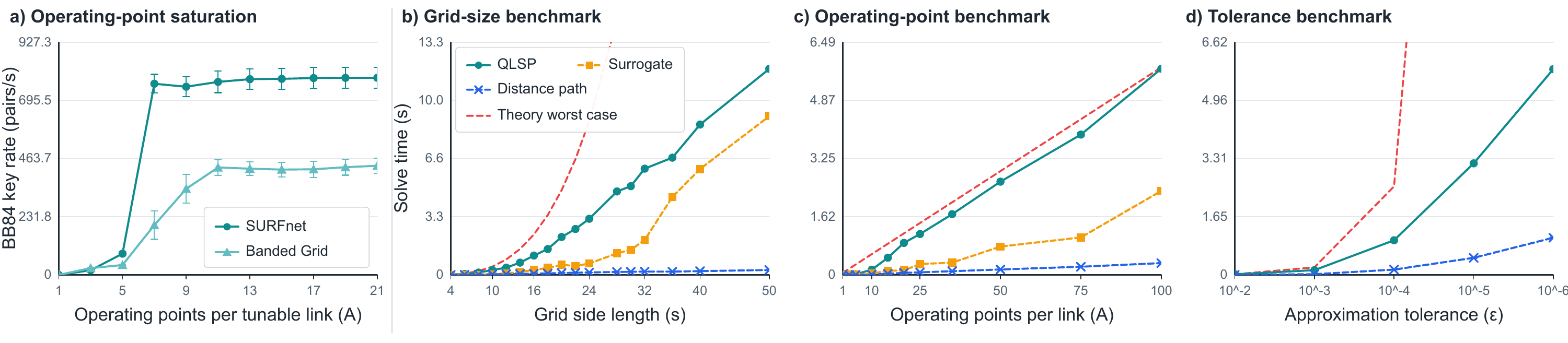}
      \vspace{-2em}
    \caption{\textbf{Operating-point resolution and runtime benchmarks.}
    (a) Expected BB84 secret key rate returned by QLSP as the number of
    single-click operating points $A$ increases; BBPSSW menus remain fixed.
    Results average 50 sampled pairs for the banded grid and 48 for SURFnet.
    Runtime is measured on square grids as a function of (b) grid side $s$,
    with $A=25$ and \(\varepsilon=10^{-3}\); (c) $A$, with $s=16$ and
    \(\varepsilon=10^{-3}\); and (d) \(\varepsilon\), with $s=8$ and
    $A=25$. Runtime experiments use negativity utility. Error bars are 95\% confidence intervals, with 32 timed
    runs per runtime point. Red curves evaluate the worst-case complexity as in
    \S~\ref{par:complexity} and are scaled as upper envelopes of the QLSP
    measurements. The surrogate is omitted from (d) because it does not use
    the layered approximation tolerance. Runtime experiments used a binary
    heap implementation on a \textit{MacBook Pro 14-inch (2024), 24GB}.}
    \label{fig:saturation-benchmark}
      \vspace{-1em}
\end{figure*}

\begin{table}[t]
\centering
\caption{Performance baselines expressed as percent of QLSP utility.}\label{tab:baseline_comp}
\scriptsize
\setlength{\tabcolsep}{3.5pt}
\renewcommand{\arraystretch}{1.12}
\begin{tabular}{@{}llccc@{}}
\hline
Topology & Utility & Local utility & Distance path & Werner only \\
\hline
Erd\"{o}s--R\'enyi & BB84 & \(69.8 \pm 11.2\) & \(\mathbf{86.3 \pm 5.5}\) & \(27.9 \pm 1.5\) \\
 & Fidelity threshold & \(57.3 \pm 1.1\) & \(\mathbf{86.8 \pm 6.0}\) & \(46.5 \pm 2.4\) \\
\hline
SURFnet & BB84 & \(80.4 \pm 5.5\) & \(\mathbf{92.8 \pm 2.1}\) & \(1.68 \pm 0.06\) \\
 & Fidelity threshold & \(69.9 \pm 8.1\) & \(\mathbf{92.7 \pm 2.1}\) & \(1.27 \pm 0.05\) \\
\hline
Banded Grid & BB84 & \(0.0 \pm 0.0\) & \(\mathbf{100.0 \pm 0.0}\) & \(5.53 \pm 0.08\) \\
 & Fidelity threshold & \(0.0 \pm 0.0\) & \(\mathbf{100.0 \pm 0.0}\) & \(22.5 \pm 3.7\) \\
\hline
\end{tabular}
\end{table}

We evaluate the algorithms on three representative networks (Fig.~\ref{fig:frontiers}a): a $100$-node Erd\"{o}s--Rényi (ER) graph~\cite{erdos1959} with $p=0.05$, the $50$-node \emph{SURFnet} real-life topology~\cite{Topology_zoo}, and an $8 \times 12$ grid with homogeneous 10 km links. For the ER topology, nodes are randomly placed on a $60$ km square, and link lengths are set to the Euclidean distance between nodes. 

As shown in Fig.~\ref{fig:frontiers}a, all SURFnet links use the
single-click model in \eqref{eq:single-click-tradeoff}--\eqref{eq:single-click-operating-point}.  We set the LLEG repetition rate to
$R_{\mathrm{rep}}=10^7~\mathrm{s}^{-1}$, the aggregate non-fiber efficiency
to $c=0.36$, and the fiber loss to $0.2~\mathrm{dB/km}$, equivalently
$L_{\mathrm{att}}\simeq22~\mathrm{km}$.  These parameters determine the rate
scale but do not affect the qualitative comparisons.  Each tunable
single-click link is represented by $A$ operating points sampled uniformly over
Werner parameter $W\in[0.9,0.9999]$.  We vary $A$ to measure how performance changes as the continuous physical tradeoff is represented more finely.

In the ER topology, single-click generation is fixed at $W_0=0.96$ and its
corresponding rate $R_0$.  Each link instead exposes the five-point BBPSSW menu
obtained from 0-4 distillation rounds using
\eqref{eq:bbpssw-success} and \eqref{eq:bbpssw-recurrence}.  The
protocol-banded grid isolates hardware heterogeneity while keeping every link
length equal to $10$ km: links in the outer regions use the tunable
single-click menu, whereas links within central columns 5--8, together with
the interfaces entering and leaving this region, use the same five-point
BBPSSW menu (purple in Fig.~\ref{fig:frontiers}a).

\paragraph{Discussion}\label{par:discussion}
The proof-of-concept in Fig.~\ref{fig:proof-of-concept-nonconvex}
established how scalarization omits solutions favored by a non-convex
utility. Fig.~\ref{fig:frontiers} examines whether the same phenomenon persists
in representative networks. In all three settings, the QLSP frontier contains
Pareto-optimal points that are not supported by its lower convex envelope.
Thus, unsupported configurations arise with spatially heterogeneous links, in
a real network topology, and with heterogeneous protocols. This
comparison is independent of a particular utility: every displayed unsupported
point is a feasible candidate that may be preferred by an appropriate monotone
non-convex utility, but that no choice of surrogate weights can return. 

Having isolated this structural limitation of the surrogate, we compare in Table~\ref{tab:baseline_comp}
QLSP with three heuristic baselines---\emph{Local utility}, \emph{distance path}, and \emph{Werner only}---from \S~\ref{subsec:baselines}.
For a baseline \(b\), each entry reports the performance ratio
\begin{equation}
    \rho_b =
    100\,\mathbb{E}_{(s,t)}
    \left[
        U_b(s,t)/U_{\mathrm{QLSP}}(s,t)
    \right],
\end{equation}
where the expectation is over all $s$--$t$ pairs separated by at least eight hops. A value of \(100\%\) therefore
means that the baseline always matches QLSP, while lower values measure the
average fraction of QLSP utility retained. Values are obtained as an average over $50$ random pairs and errors are $95$\% confidence intervals. We did not include \emph{rate only} since it disregards fidelity entirely and its performance never matches that of QLSP.

The \emph{distance path} is the strongest of these baselines, retaining approximately
\(86\%\)--\(93\%\) of QLSP utility on ER and SURFnet and matching QLSP on the
banded grid. Selecting a physical path before configuring it can be lossless
when the two decisions are separable, for example when only one path exists,
when the few available paths can all be configured and compared, or when
homogeneity makes a shortest path dominate independently of its
configuration. The banded grid is favorable in this sense: its regular geometry makes the distance path sufficient for the sampled pairs. In general, heterogeneous link lengths or operating-point
menus can make a physically longer path preferable after configuration, as the
ER and SURFnet gaps demonstrate. The more local policies, as expected, are less robust:
\emph{Local utility} retains between \(0\%\) and approximately \(80\%\), while
\emph{Werner only} retains approximately \(1\%\)--\(47\%\). Their
performance also changes between BB84 and threshold utility, showing that a
fixed local objective need not remain aligned with the application objective.

\paragraph{Saturation and Runtime}\label{par:benchmark}

Fig.~\ref{fig:saturation-benchmark}a evaluates how finely the continuous
single-click tradeoff must be represented. A small $A$ can exclude useful
configurations and substantially reduce the achievable key rate. SURFnet
enters its high-utility regime around \(A=7\), whereas the banded grid
stabilizes around \(A=11\); beyond these values, additional operating points
provide only marginal gains. The resolution required
for saturation depends on the topology and protocol composition. 
We observe similar trends across additional topologies and utility functions, omitted here for space.

Fig.~\ref{fig:saturation-benchmark}b show that QLSP is practical at the network
sizes considered here. In particular, it solves the largest \(50\times50\)
grid, with \(2{,}500\) nodes and \(4{,}900\) links, in approximately
\(11.8\) seconds on a \textit{2024 MacBook Pro}. The observed growth is consistent with the polynomial worst-case analysis in \S~\ref{par:complexity}. 
Runtime grows approximately linearly with the number of operating points \(A\) (Fig.~\ref{fig:saturation-benchmark}c), as expected from the explicit \(A\) factor in the complexity bound. Decreasing the approximation tolerance (Fig.~\ref{fig:saturation-benchmark}d) increases the number of fidelity layers and, consequently, the explored state space; nevertheless, QLSP solves the
\(\varepsilon=10^{-6}\) instance on the fixed \(8\times8\) grid in
approximately \(5.9\) seconds. The theoretical curves are scaled worst-case upper envelopes rather than regression fits, so their comparison is only intended to illustrate the asymptotic trend.

The other methods reduce computational cost by searching a smaller solution
space. 
The adaptive surrogate sweep is generally faster than QLSP, although enumerating all supported solutions may require a superpolynomial number of shortest-path queries in the worst case.
The \emph{distance path} restriction yields the largest reduction: on the
\(50\times50\) grid it requires approximately \(0.28\) seconds because QLSP is
run only on the preselected physical path. This makes path preselection
valuable when it is independent of link configuration. Outside these
regimes, the runtime advantage must be weighed against the utility losses reported in Table~\ref{tab:baseline_comp}, since fixing the physical path can exclude the globally optimal joint route and configuration.

\section{Conclusion}
  \label{sec:concl}
  We have presented and analyzed QLSP---an efficient approximation algorithm for jointly selecting and configuring paths in a quantum network so as to maximize a quantum utility objective between two network nodes. We have also characterized when a potentially computationally cheaper alternative scheme---based on a weighted-sum scalarization of functions of entanglement generation rate and fidelity---is optimal. Through numerical evaluation on a variety of quantum networks, we have shown that QLSP is able to reach higher quantum utility values compared to simpler baselines that track only rate, fidelity, or a local link-level utility. 
  As future directions, QLSP could be adapted to accommodate bipartite entanglement beyond Werner states, multi-path entanglement routing (i.e., a multi-commodity formulation of the problem), and non-Pauli noise models.
  \section*{Use of AI Disclosure}
  \label{sec:disclosure}
The authors used \textit{ChatGPT 5.6}, under their supervision, to assist with documenting the code, orchestrating automated runs and collecting results, and preparing plot layouts and annotations. The authors also used \textit{ChatGPT 5.6} and \textit{Claude Opus 5} to improve the clarity of the manuscript. The authors reviewed and take responsibility for all generated content, analyses, and results.

\section*{Acknowledgment}
 This work was supported in part by the NSF award \#2522101. It is also supported in part by the NSF grants  \#2346089,  \#2402861, and NSF- ERC Center for Quantum Networks grant EEC-1941583.



\bibliographystyle{IEEEtran}
\bibliography{reference}

@article{caleffi2017optimal,
  title={Optimal routing for quantum networks},
  author={Caleffi, Marcello},
  journal={IEEE Access},
  volume={5},
  pages={22299--22312},
  year={2017},
  publisher={IEEE}
}

@article{ghaderibaneh2022efficient,
  title={Efficient quantum network communication using optimized entanglement swapping trees},
  author={Ghaderibaneh, Mohammad and Zhan, Caitao and Gupta, Himanshu and Ramakrishnan, CR},
  journal={IEEE TQE},
  volume={3},
  pages={1--20},
  year={2022},
  publisher={IEEE}
}

@article{shi2020concurrent,
  title={Concurrent Entanglement Routing for Quantum Networks: Model and Designs},
  author={Shi, Shouqian and Zhang, Xiaoxue and Qian, Chen},
  journal={IEEE/ACM Transactions on Networking},
  volume={32},
  number={3},
  pages={2205--2220},
  year={2024},
  publisher={IEEE}
}

@article{chakraborty2020entanglement,
  title={Entanglement distribution in a quantum network: A multicommodity flow-based approach},
  author={Chakraborty, Kaushik and Elkouss, David and Rijsman, Bruno and Wehner, Stephanie},
  journal={IEEE TQE},
  volume={1},
  pages={1--21},
  year={2020},
  publisher={IEEE}
}

@article{victora2023entanglement,
  title={Entanglement purification on quantum networks},
  author={Victora, Michelle and Tserkis, Spyros and Krastanov, Stefan and de la Cerda, Alexander Sanchez and Willis, Steven and Narang, Prineha},
  journal={Physical Review Research},
  volume={5},
  number={3},
  pages={033171},
  year={2023},
  publisher={APS}
}

@article{NUM1,
  author={F. P. Kelly and K. Aman and Maulloo and D. K. H. Tan},
  journal={Journal of the
Operational Research Society}, 
  title={Rate control for communication
networks: shadow prices, proportional fairness and stability.}, 
  year={1998},
  volume={49},
  number={3},
  pages={237-252},
  doi={}}

@article{NUM2,
  author={S. H. Low and D. E. Lapsley},
  journal={IEEE/ACM Transactions on Networking,}, 
  title={{Optimization flow control. I. Basic algorithm and
convergence.}}, 
  year={1999},
  volume={7},
  number={6},
  pages={861-874},
  doi={}}

@article{NUM3,
  author={D. P. Palomar and M. Chiang},
  journal={IEEE Trans. on Automatic
Control}, 
  title={Alternative distributed algorithms for network
utility maximization: Framework and applications.}, 
  year={2007},
  volume={52},
  number={12},
  pages={2254-2269},
  doi={}}

@article{MAT2015Qrepeaters,
  author={Munro, William J. and Azuma, Koji and Tamaki, Kiyoshi and Nemoto, Kae},
  journal={IEEE Journal of Selected Topics in Quantum Electronics}, 
  title={Inside Quantum Repeaters}, 
  year={2015},
  volume={21},
  number={3},
  pages={78-90},
  doi={10.1109/JSTQE.2015.2392076}}

@misc{SWarxivQoS,
  doi = {10.48550/ARXIV.2111.13124},
  
  url = {https://arxiv.org/abs/2111.13124},
  
  author = {Skrzypczyk, Matthew and Wehner, Stephanie},
  
  title = {An Architecture for Meeting Quality-of-Service Requirements in Multi-User Quantum Networks},
  
  publisher = {arXiv},
  
  year = {2021}}

@article{2012_DiFranco_OptimalPath,
  title = {Optimal path for a quantum teleportation protocol in entangled networks},
  author = {Di Franco, C. and Ballester, D.},
  journal = {Phys. Rev. A},
  volume = {85},
  issue = {1},
  pages = {010303},
  numpages = {4},
  year = {2012},
  month = {Jan},
  publisher = {American Physical Society},
  doi = {10.1103/PhysRevA.85.010303}
}

@article{2013_vanMeter_path_selection,
title = {Path selection for quantum repeater networks},
author ={Van Meter, R. and Satoh, T. and Ladd, T. D. and Munro, B. and Nemoto, K.},
journal ={Networking Science},
year = {2013},
volume = {3},
pages ={82--95},
doi = {10.1007/s13119-013-0026-2},
publisher = {Tsinghua University Press}
}

@article{2014.NatPhys.Komar-Lukin.QuantClocks,
archivePrefix = {arXiv},
arxivId = {arXiv:1310.6045v1},
author = {K{\'{o}}m{\'{a}}r, Peter and Kessler, Eric M. and Bishof, Michael and Jiang, Liang and S{\o}rensen, Anders S. and Ye, Jun and Lukin, Mikhail D.},
doi = {10.1038/nphys3000},
eprint = {arXiv:1310.6045v1},
isbn = {doi:10.1038/nphys3000},
issn = {1745-2473},
journal = {Nature Physics},
month = {oct},
number = {8},
pages = {582--587},
title = {{A quantum network of clocks}},
volume = {10},
year = {2014}
}

@article{2016.ArXiv.Schoute-Wehner.PerfQnetroute,
  title={Shortcuts to quantum network routing},
  author={Schoute, Eddie and Mancinska, Laura and Islam, Tanvirul and Kerenidis, Iordanis and Wehner, Stephanie},
  journal={arXiv preprint arXiv:1610.05238},
  year={2016}
}

@article{2014.TheoCompSc.Bennett-Brassard.BB84,
author = {Bennett, Charles H. and Brassard, Gilles},
doi = {10.1016/j.tcs.2014.05.025},
issn = {03043975},
journal = {Theoretical Computer Science},
pages = {7--11},
title = {{Quantum cryptography: Public key distribution and coin tossing}},
volume = {560},
year = {2014}
}

@article{2012.PRL.Gottesman-Croke.QRepTelescope,
author = {Gottesman, Daniel and Jennewein, Thomas and Croke, Sarah},
doi = {10.1103/PhysRevLett.109.070503},
issn = {0031-9007},
journal = {Physical Review Letters},
month = {aug},
number = {7},
pages = {070503},
publisher = {American Physical Society},
title = {{Longer-Baseline Telescopes Using Quantum Repeaters}},
volume = {109},
year = {2012}
}

@article{1991.PRL.Ekert.QCrypt,
author = {Ekert, Artur K.},
doi = {10.1103/PhysRevLett.67.661},
issn = {0031-9007},
journal = {Physical Review Letters},
month = {aug},
number = {6},
pages = {661--663},
publisher = {American Physical Society},
title = {{Quantum cryptography based on Bell's theorem}},
volume = {67},
year = {1991}
}

@inproceedings{bb84,
	address = {New York},
	author = {Bennett, C. H. and Brassard, G.},
	booktitle = {Proc. IEEE Int. Conf. Comput., Syst. Signal Process.},
	location = {Bangalore, India},
	pages = {175--179},
	publisher = {IEEE Press},
	title = {{Quantum Cryptography: Public Key Distribution and Coin Tossing}},
	year = {1984}
}

@article{pant2019routing,
  title={Routing entanglement in the quantum internet},
  author={Pant, Mihir and Krovi, Hari and Towsley, Don and Tassiulas, Leandros and Jiang, Liang and Basu, Prithwish and Englund, Dirk and Guha, Saikat},
  journal={npj Quantum Information},
  volume={5},
  number={1},
  pages={25},
  year={2019},
  publisher={Nature Publishing Group UK London}
}

@article{vardoyan2024bipartite,
  title={On the bipartite entanglement capacity of quantum networks},
  author={Vardoyan, Gayane and Van Milligen, Emily and Guha, Saikat and Wehner, Stephanie and Towsley, Don},
  journal={IEEE Transactions on Quantum Engineering},
  volume={5},
  pages={1--14},
  year={2024},
  publisher={IEEE}
}

@article{a10,
  author    = {Li, Jian and Wang, Mingjun and Xue, Kaiping and Li, Ruidong and Yu, Nenghai and Sun, Qibin and Lu, Jun},
  title     = {Fidelity-Guaranteed Entanglement Routing in Quantum Networks},
  journal   = {IEEE Transactions on Communications},
  volume    = {70},
  number    = {10},
  year      = {2022},
  pages     = {6748-6763}
}

@article{giovannetti2011advances,
  title={Advances in quantum metrology},
  author={Giovannetti, Vittorio and Lloyd, Seth and Maccone, Lorenzo},
  journal={Nature photonics},
  volume={5},
  number={4},
  pages={222--229},
  year={2011},
  publisher={Nature Publishing Group UK London}
}

@article{khabiboulline2019quantum,
  title={Quantum-assisted telescope arrays},
  author={Khabiboulline, Emil T and Borregaard, Johannes and De Greve, Kristiaan and Lukin, Mikhail D},
  journal={Physical Review A},
  volume={100},
  number={2},
  pages={022316},
  year={2019},
  publisher={APS}
}

@article{jiang2007distributed,
  title={Distributed quantum computation based on small quantum registers},
  author={Jiang, Liang and Taylor, Jacob M and S{\o}rensen, Anders S and Lukin, Mikhail D},
  journal={Physical Review A—Atomic, Molecular, and Optical Physics},
  volume={76},
  number={6},
  pages={062323},
  year={2007},
  publisher={APS}
}

@article{cabrillo1999creation,
  title={Creation of entangled states of distant atoms by interference},
  author={Cabrillo, Carlos and Cirac, J Ignacio and Garcia-Fernandez, Pablo and Zoller, Peter},
  journal={Physical Review A},
  volume={59},
  number={2},
  pages={1025},
  year={1999},
  publisher={APS}
}

@inproceedings{vardoyan2023quantum,
  title={Quantum network utility maximization},
  author={Vardoyan, Gayane and Wehner, Stephanie},
  booktitle={2023 IEEE International Conference on Quantum Computing and Engineering (QCE)},
  volume={1},
  pages={1238--1248},
  year={2023},
  organization={IEEE}
}

@article{deutsch1996quantum,
  title={Quantum privacy amplification and the security of quantum cryptography over noisy channels},
  author={Deutsch, David and Ekert, Artur and Jozsa, Richard and Macchiavello, Chiara and Popescu, Sandu and Sanpera, Anna},
  journal={Physical review letters},
  volume={77},
  number={13},
  pages={2818},
  year={1996},
  publisher={APS}
}

@article{dur1999quantum,
  title={Quantum repeaters based on entanglement purification},
  author={D{\"u}r, W and Briegel, H-J and Cirac, JI and Zoller, P},
  journal={Phys. Rev. A},
  volume={59},
  number={1},
  pages={169},
  year={1999},
  publisher={APS}
}

@article{bennett1996purification,
  title={Purification of noisy entanglement and faithful teleportation via noisy channels},
  author={Bennett, Charles H and Brassard, Gilles and Popescu, Sandu and Schumacher, Benjamin and Smolin, John A and Wootters, William K},
  journal={Physical review letters},
  volume={76},
  number={5},
  pages={722},
  year={1996},
  publisher={APS}
}

@book{nielsen2010quantum,
  title={Quantum computation and quantum information},
  author={Nielsen, Michael A and Chuang, Isaac L},
  year={2010},
  publisher={Cambridge university press}
}

@article{Bennett96,
  title = {Mixed-state entanglement and quantum error correction},
  author = {Bennett, Charles H. and DiVincenzo, David P. and Smolin, John A. and Wootters, William K.},
  journal = {Phys. Rev. A},
  volume = {54},
  issue = {5},
  pages = {3824--3851},
  numpages = {0},
  year = {1996},
  month = {Nov},
  publisher = {American Physical Society},
  doi = {10.1103/PhysRevA.54.3824},
}

@article{vidal2002computable,
  title={Computable measure of entanglement},
  author={Vidal, Guifr{\'e} and Werner, Reinhard F},
  journal={Physical Review A},
  volume={65},
  number={3},
  pages={032314},
  year={2002},
  publisher={APS}
}

@inproceedings{zhao2022e2e,
  title={{E2E fidelity aware routing and purification for throughput maximization in quantum networks}},
  author={Zhao, Yangming and Zhao, Gongming and Qiao, Chunming},
  booktitle={IEEE INFOCOM 2022},
  pages={480--489},
  year={2022}
}

@article{xiao2024purification,
  title={Purification scheduling control for throughput maximization in quantum networks},
  author={Xiao, Zirui and Li, Jian and Xue, Kaiping and Yu, Nenghai and Li, Ruidong and Sun, Qibin and Lu, Jun},
  journal={Communications Physics},
  volume={7},
  number={1},
  pages={307},
  year={2024},
  publisher={Nature Publishing Group UK London}
}

@article{fitzsimons2017unconditionally,
  title={Unconditionally verifiable blind quantum computation},
  author={Fitzsimons, Joseph F and Kashefi, Elham},
  journal={Phys. Rev. A},
  volume={96},
  number={1},
  pages={012303},
  year={2017},
  publisher={APS}
}

@article{panigrahy2025framework,
  title={A Framework for Distributed Resource Allocation in Quantum Networks},
  author={Panigrahy, Nitish K and Bacciottini, Leonardo and Hollot, CV and Van Milligen, Emily A and de Andrade, Matheus Guedes and Rao, Nageswara SV and Vardoyan, Gayane and Towsley, Don},
  journal={arXiv preprint arXiv:2510.09371},
  year={2025}
}

@inproceedings{kar2026utility,
  title={{On Utility-optimal Entanglement Routing in Quantum Networks}},
  author={Kar, Sounak and Mukhopadhyay, Arpan},
  booktitle={2026 International Conference on Quantum Communications, Networking, and Computing (QCNC)},
  pages={457--464},
  year={2026},
  organization={IEEE}
}

@inproceedings{juttner2001lagrange,
  author    = {J{\"u}ttner, Alp{\'a}r and Szviatovszki, Bal{\'a}zs and M{\'e}cs, Ildik{\'o} and Rajk{\'o}, Zsolt},
  title     = {Lagrange Relaxation Based Method for the {QoS} Routing Problem},
  booktitle = {Proc. IEEE INFOCOM},
  year      = {2001},
  pages     = {859--868}
}

@phdthesis{carstensen1983complexity,
  author = {Carstensen, Patricia J.},
  title  = {The Complexity of Some Problems in Parametric Linear and Combinatorial Programming},
  school = {University of Michigan},
  year   = {1983}
}

@article{hassin1992approximation,
  author  = {Hassin, Refael},
  title   = {Approximation Schemes for the Restricted Shortest Path Problem},
  journal = {Mathematics of Operations Research},
  volume  = {17},
  number  = {1},
  pages   = {36--42},
  year    = {1992}
}

@article{wang1996quality,
  author  = {Wang, Zheng and Crowcroft, Jon},
  title   = {Quality-of-Service Routing for Supporting Multimedia Applications},
  journal = {IEEE Journal on Selected Areas in Communications},
  volume  = {14},
  number  = {7},
  pages   = {1228--1234},
  year    = {1996}
}

@article{lorenz2001simple,
  author  = {Lorenz, Dean H. and Raz, Danny},
  title   = {A Simple Efficient Approximation Scheme for the Restricted Shortest Path Problem},
  journal = {Operations Research Letters},
  volume  = {28},
  number  = {5},
  pages   = {213--219},
  year    = {2001}
}

@article{kwiat1999ultrabright,
  title={Ultrabright source of polarization-entangled photons},
  author={Kwiat, Paul G and Waks, Edo and White, Andrew G and Appelbaum, Ian and Eberhard, Philippe H},
  journal={Physical Review A},
  volume={60},
  number={2},
  pages={R773},
  year={1999},
  publisher={APS}
}

@article{abane2025entanglement,
  title={Entanglement routing in quantum networks: A comprehensive survey},
  author={Abane, Amar and Cubeddu, Michael and Mai, Van Sy and Battou, Abdella},
  journal={IEEE Transactions on Quantum Engineering},
  year={2025},
  publisher={IEEE}
}

@article{coutinho2024fidelitycurves,
  author  = {Coutinho, Bruno C. and Monteiro, Raul and Bugalho, Lu\'is and Monteiro, Francisco A.},
  title   = {Entanglement Routing Based on Fidelity Curves},
  journal = {arXiv preprint arXiv:2303.12864},
  year    = {2024}
}

@inproceedings{zhang2025linkconfig,
  author    = {Zhang, Qiaolun and Di Cicco, Nicola and Ibrahimi, M\"emedhe and Almeida Jr., Raul C. and Gatto, Alberto and Boutaba, Raouf and Tornatore, Massimo},
  title     = {Link Configuration for Fidelity-Constrained Entanglement Routing in Quantum Networks},
  booktitle = {IEEE INFOCOM 2025 -- IEEE Conference on Computer Communications},
  year      = {2025}
}

@article{erdos1959,
  title={{On random graphs I}},
  author={Erd{\H{o}}s, Paul and R{\'e}nyi, Alfr{\'e}d},
  journal={Publicationes Mathematicae Debrecen},
  volume={6},
  pages={290--297},
  year={1959}
}

@article{Topology_zoo,
  author={Knight, Simon and Nguyen, Hung X. and Falkner, Nickolas and Bowden, Rhys and Roughan, Matthew},
  journal={IEEE Journal on Selected Areas in Communications}, 
  title={The Internet Topology Zoo}, 
  year={2011},
  volume={29},
  number={9},
  pages={1765-1775},
  doi={10.1109/JSAC.2011.111002}}

@article{dljkstra1959note,
  title={A Note on Two Problems in Connexion with Graphs},
  author={Dijkstra, EW},
  journal={Numerische Mathematik},
  volume={50},
  pages={269--271},
  year={1959}
}

@article{werner1989quantum,
  title={{Quantum states with Einstein-Podolsky-Rosen correlations admitting a hidden-variable model}},
  author={Werner, Reinhard F},
  journal={Physical Review A},
  volume={40},
  number={8},
  pages={4277},
  year={1989},
  publisher={APS}
}

@inproceedings{aggarwal1986geometric,
  title={Geometric applications of a matrix searching algorithm},
  author={Aggarwal, Alok and Klawe, Maria and Moran, Shlomo and Shor, Peter and Wilber, Robert},
  booktitle={Proceedings of the second annual symposium on Computational geometry},
  pages={285--292},
  year={1986}
}
%

\appendix
In this appendix, we show that:
1) time multiplexing between multiple network paths is not advantageous over using exclusively one path, when the utility is convex; 2) continuous operating points do not eliminate the duality gap, so although scalarization allows us to incorporate continuous operating points easily, it is not guaranteed to recover the optimum; and 3) under certain conditions, incorporating continuous operating points into the layered graph can be done without paying the $K^2$ penalty. The basic idea is to switch to a hop-indexed dynamic program (DP), similar to Bellman--Ford, and exploit a special property of the transition structure to compute all of the min-plus values efficiently. 
When the transition matrix has the so-called Monge structure, one can show that the indices of the optimal transitions vary monotonically across layers. This allows all optimal values to be computed theoretically in $O(K)$ time, or in $O(K\log K)$ time using a simpler divide-and-conquer scheme. 

\section*{Time multiplexing}\label{app:time-muxing}
Generally, the network may time-multiplex among multiple routes or configurations.
Consider two routing solutions with end-to-end rate--fidelity pairs
$(R_1,F_1)$ and $(R_2,F_2)$. Suppose that the first solution is used for
a fraction $f\in[0,1]$ of the time. The aggregate generation rate is
\begin{equation}
    R(f)=fR_1+(1-f)R_2.
    \label{eq:mux-rate}
\end{equation}
Because the two solutions generate $fR_1$ and $(1-f)R_2$ pairs per unit
time, respectively, the fidelity of the average distributed state is
\begin{equation}
    F(f)
    =
    \frac{fR_1F_1+(1-f)R_2F_2}
         {fR_1+(1-f)R_2}.
    \label{eq:mux-fidelity}
\end{equation}

Consider a utility of the form $U(R,F)=R\,g(F)$, where $g$ is convex. Let $\lambda(f) = fR_1/(fR_1+(1-f)R_2)$. By Jensen's inequality,
\begin{align}
    U(R(f),F(f))
    &=
    R(f)g\!\left(
        \lambda(f)F_1+(1-\lambda(f))F_2
    \right) \nonumber\\
    &\leq
    R(f)\left[
        \lambda(f)g(F_1)
        +(1-\lambda(f))g(F_2)
    \right] \nonumber\\
    &=
    fR_1g(F_1)+(1-f)R_2g(F_2) \nonumber\\
    &=
    fU(R_1,F_1)+(1-f)U(R_2,F_2) \nonumber\\
    &\leq
    \max\{U(R_1,F_1),U(R_2,F_2)\}.
    \label{eq:mux-convex}
\end{align}
Thus, time multiplexing cannot outperform the better of the two solutions when $g$ is convex. This
includes the negativity and BB84 utilities. Time multiplexing can, however, improve a nonconvex utility. Consider
\[
U_{\mathrm{th}}(R,F)
=
R\mathbf{1}[F\geq F_{\mathrm{req}}].
\]
Suppose solution 1 has higher rate but falls below the threshold, while
solution 2 is feasible:
\[
R_1>R_2,
\qquad
F_1<F_{\mathrm{req}}\leq F_2.
\]
The multiplexed solution satisfies the fidelity requirement if and only if
\begin{equation}
    fR_1(F_1-F_{\mathrm{req}})
    +(1-f)R_2(F_2-F_{\mathrm{req}})
    \geq 0.
    \label{eq:mux-threshold-condition}
\end{equation}
Since $R(f)$ increases with $f$, the optimal mixture uses the largest
feasible fraction of the high-rate solution. The threshold is therefore
active, yielding
\begin{equation}
    f^\star
    =
    \frac{R_2(F_2-F_{\mathrm{req}})}
    {R_1(F_{\mathrm{req}}-F_1)
     +R_2(F_2-F_{\mathrm{req}})}.
    \label{eq:mux-optimal-fraction}
\end{equation}
The resulting rate is
\begin{equation}
    R^\star_{\mathrm{mux}}
    =
    \frac{
        R_1R_2(F_2-F_1)
    }{
        R_1(F_{\mathrm{req}}-F_1)
        +R_2(F_2-F_{\mathrm{req}})
    }.
    \label{eq:mux-optimal-rate}
\end{equation}
Whenever $F_2>F_{\mathrm{req}}$, we have $f^\star>0$, and hence
\[
R^\star_{\mathrm{mux}}
=
f^\star R_1+(1-f^\star)R_2
>
R_2.
\]
Thus, mixing a high-rate, sub-threshold solution with a lower-rate,
high-fidelity solution can strictly outperform the feasible solution
used alone. 

\section*{Continuous Operating Points}
\label{app:continuous}
Continuous operating points \textit{do not} eliminate the global scalarization gap. Even when the configuration problem on every fixed path is convex and has zero duality gap, the full routing problem includes a discrete choice among physical paths. The global achievable set is a union of path-specific feasible sets and need not be convex. Moreover, each path may require a different Lagrange multiplier. Thus, a configuration can be optimal for the constrained problem yet unsupported by every global weighted-sum scalarization.

Fig.~\ref{fig:continuous-scalarization-gap} illustrates this reasoning. The network in this example contains
three internally disjoint two-link paths $P_1$, $P_2$, and $P_3$ between
$s$ and $t$. 
Solutions are shown from coarse to relatively fine operating point tuning (as the plots go from left to right).
In the $(X,Y)=(1/R,-\ln W)$ plane, the configurations of each
physical path form a cluster of points. With continuous link tuning, these
clusters extend into path-specific attainable regions.
Such regions can overlap; they happen to be disjoint here. Increasing the
resolution of operating points fills these regions more densely, but
does not make their union convex.

All links follow the single-click model 
\eqref{eq:single-click-tradeoff}--\eqref{eq:single-click-operating-point},
with $R_{\mathrm{rep}}=10^8~\mathrm{s}^{-1}$. The effective transmissions
on the first links of $P_1$, $P_2$, and $P_3$ are, respectively,
$(\eta_1,\eta_2,\eta_3)=10^{-3}(1,0.8,10/3)$; the second link of $P_i$
has transmission $4\eta_i$. Both links on each path are independently
tunable over $W\in[0.900,0.910]$, $[0.925,0.940]$, and $[0.988,0.992]$,
respectively. These path-dependent values are chosen for illustrative purposes so that the attainable regions of these paths do not overlap.
We sample $A=3,17,257$ uniformly spaced operating points per link, and use $W_{\min}=0.8$ and
$K=4463$ fidelity-loss bins. Utilities are evaluated using the recovered
e2e Werner parameter; exhaustive enumeration provides the sampled frontier
and checks the QLSP and adaptive-surrogate solutions.

For $U_{\mathrm{th}}$ with $W_{\mathrm{req}}=0.88$, QLSP selects an
unsupported configuration (teal stars in the figure) on $P_2$ at every sampled resolution, whereas
the surrogate selects $P_3$. At $A=257$, their rates are approximately
$6.05\times10^3$ and $4.80\times10^3$ pairs/s, respectively. Finer sampling
thus resolves the intermediate island but does not bring its sampled optimum
onto the lower convex envelope.

\begin{figure*}[t]
\centering
\includegraphics[width=\textwidth]{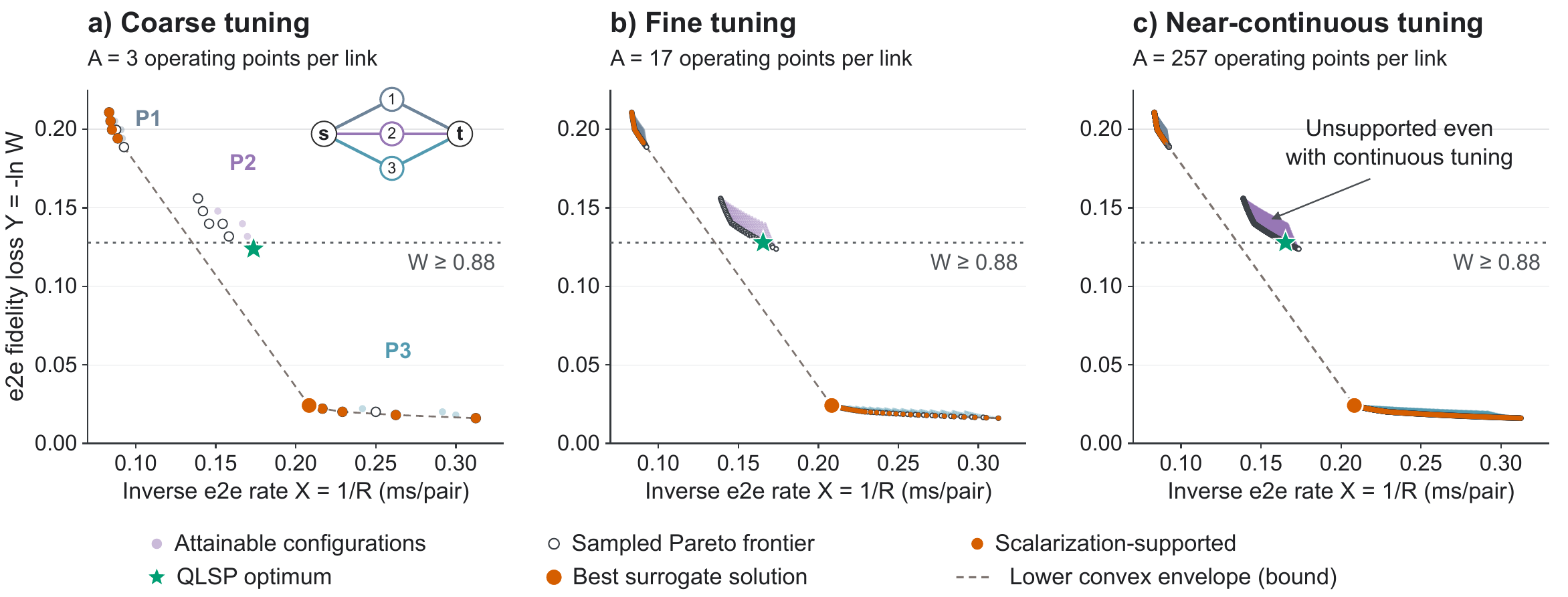}
\caption{\textbf{Refining operating points within fixed physical paths.}
The same three paths are sampled with a) $A=3$, b) $A=17$, and
c) $A=257$ operating points per link. Pale points show attainable
configurations; hollow circles mark sampled Pareto points and orange
circles mark scalarization-supported points. The horizontal line marks $W_{\mathrm{req}}=0.88$, with feasible
configurations below it. Stars identify QLSP solutions; larger orange
circles identify the best feasible surrogate solutions.}
\label{fig:continuous-scalarization-gap}
\end{figure*}

\section*{Efficient layered graphs approach for continuous operating points}

Continuous operating points $\mathcal{A}_e$ can be incorporated without
discretizing the operating-point curve when the per-link inverse-rate
cost has suitable convex structure in $y=-\ln W$. In this case, each
edge relaxation is a convex min-plus convolution whose transition
matrix is Monge. A hop-indexed dynamic program can compute all layer
transitions through an edge in $O(K\log K)$ time using a simple
divide-and-conquer algorithm, or in $O(K)$ time using the ``SMAWK" algorithm~\cite{aggarwal1986geometric}. The resulting complexities
are $O(L|\mathcal E|K\log K)$ and $O(L|\mathcal E|K)$,
respectively, rather than the $O(L|\mathcal E|K^2)$ cost of explicitly
examining all pairs of layers.

Let $D_h(v,k)$ denote the minimum cumulative inverse rate
$X=1/R$ of any path from $s$ to $v$ that uses at most $h$
physical edges and terminates in fidelity-loss layer $k$.

For each edge $e\in\mathcal{E}$, define
\begin{equation}
    c_e(j)
    =
    \min_{\substack{a\in\mathcal A_e:\\
    \left\lceil -\ln W_{e,a}/\Delta \right\rceil=j}}
    \frac{1}{R_{e,a}},
\end{equation}
with $c_e(j)=\infty$ if no operating point induces layer
increment $j$. Here $\mathcal A_e$ may be a continuous feasible
set of operating points. The dynamic program is initialized as
\begin{equation}
    D_0(s,0)=0,
    \qquad
    D_0(v,k)=\infty
    \quad\text{for }(v,k)\neq(s,0).
\end{equation}

For $h=1,\ldots,L$, and for $k=0,\dots,K$, define the relaxation through an edge
$e=(u,v)$ as
\begin{equation}
\label{eq:edge-relaxation}
    T_{h,e}(k)
    =
    \min_{0\leq j\leq k}
    \left\{
        D_{h-1}(u,k-j)+c_e(j)
    \right\}.
\end{equation}
The equation above is a min-plus convolution.
The dynamic-programming recurrence is then
\begin{equation}
\label{eq:hop-indexed-dp}
    D_h(v,k)
    =
    \min\left\{
        D_{h-1}(v,k),\;
        \min_{e=(u,v)\in\mathcal E} T_{h,e}(k)
    \right\}.
\end{equation}

After $L$ iterations, each reachable destination layer $k$ yields a
candidate path with rate
\begin{equation}
    R_k=\frac{1}{D_L(t,k)}.
\end{equation}
Its layer represents fidelity loss approximately $k\Delta$, or
$W_k\approx e^{-k\Delta}$. As in the standard QLSP algorithm, the
exact fidelity of the recovered path can instead be computed from its
selected operating points and used to evaluate the final utility.

When $c_e(j)$ is convex in $j$, the edge relaxation
\eqref{eq:edge-relaxation} is a Monge min-plus convolution, allowing
all destination-layer values to be computed jointly rather than
examining all $O(K^2)$ source--destination layer pairs.

\paragraph{Convexity of the per-link cost}
The Monge acceleration applies whenever the minimum inverse-rate cost
$c_e(j)$ is convex in the fidelity-loss increment $j$. This condition
holds, in particular, for the continuous single-click operating-point
model in Eqs.~(2)--(4). Recall that
\[
    F(\alpha)=1-\alpha,
    \qquad
    R(\alpha)=2R_{\rm rep}\eta_e\alpha,
    \qquad
    W(\alpha)=1-\frac{4}{3}\alpha,
\]
where $\eta_e=\eta(L_e)$ is the transmissivity of edge $e$.
Eliminating $\alpha$ gives the affine rate--Werner tradeoff
\begin{equation}
    R_e(W)
    =
    \frac{3}{2}R_{\rm rep}\eta_e(1-W).
\end{equation}
Writing $y=-\ln W$, so that $W=e^{-y}$, the inverse-rate cost becomes
\begin{equation}
\label{eq:single-click-convex-cost}
    x_e(y)
    =
    \frac{1}{R_e(y)}
    =
    \frac{2}
    {3R_{\rm rep}\eta_e(1-e^{-y})}.
\end{equation}
This function is decreasing and strictly convex (in the relevant region of $y$, which is nonnegative for $W\in[0,1]$):
\begin{align}
    x'_e(y)
    &=
    -\frac{2e^{-y}}
    {3R_{\rm rep}\eta_e(1-e^{-y})^2}
    <0,\\
    x''_e(y)
    &=
    \frac{2e^{-y}(1+e^{-y})}
    {3R_{\rm rep}\eta_e(1-e^{-y})^3}
    >0.
\end{align}
Since $x_e(y)$ is decreasing, the minimum inverse-rate cost within
fidelity-loss bin $j$ is attained at the largest feasible value of
$y$ in that bin. For bins whose upper endpoint $j\Delta$ lies within
the feasible operating range,
\begin{equation}
    c_e(j)=x_e(j\Delta).
\end{equation}
Sampling a convex function on a uniform grid preserves discrete
convexity, and hence
\begin{equation}
\label{eq:discrete-convexity}
    2c_e(j)
    \leq
    c_e(j-1)+c_e(j+1).
\end{equation}

\paragraph{Monge structure}
For a fixed edge $e=(u,v)$ and hop iteration $h$, define the
transition matrix over feasible source--destination layer pairs as
\begin{equation}
    M_{h,e}(k,i)
    =
    D_{h-1}(u,i)+c_e(k-i).
\end{equation}
The edge relaxation is its row minimum:
\begin{equation}
    T_{h,e}(k)=\min_i M_{h,e}(k,i),
\end{equation}
which is obtained by re-parameterizing $j=k-i$ in \eqref{eq:edge-relaxation}.
Then, discrete convexity in
\eqref{eq:discrete-convexity} gives
\begin{align}
    &M_{h,e}(k,i)+M_{h,e}(k+1,i+1) \nonumber\\
    &\qquad\leq
    M_{h,e}(k,i+1)+M_{h,e}(k+1,i),
\end{align}
so the transition matrix has Monge structure over its feasible
entries. Consequently, if $i^\star(k)$ denotes the smallest
source-layer index attaining the minimum for destination layer $k$,
then
\begin{equation}
\label{eq:monotone-minimizers}
    i^\star(k)\leq i^\star(k+1).
\end{equation}
Thus, the optimal source-layer index moves monotonically forward as
the destination layer increases.

\paragraph{An $O(K\log K)$ divide-and-conquer algorithm}
The monotonicity property \eqref{eq:monotone-minimizers} yields a
simple divide-and-conquer algorithm for computing all row minima.
Suppose we wish to compute $T_{h,e}(k)$ for destination layers
$k\in[k_\ell,k_r]$ and know that their minimizing source layers lie
in $i\in[i_\ell,i_r]$. We first evaluate the middle layer
\[
    k_m=\left\lfloor\frac{k_\ell+k_r}{2}\right\rfloor
\]
by scanning the feasible source indices in $[i_\ell,i_r]$ and
obtaining
\[
    i_m=i^\star(k_m).
\]
Monotonicity implies that all destination layers to the left of
$k_m$ have minimizers in $[i_\ell,i_m]$, while all layers to the
right have minimizers in $[i_m,i_r]$. We therefore recurse on
\[
    [k_\ell,k_m-1]\times[i_\ell,i_m]
\]
and
\[
    [k_m+1,k_r]\times[i_m,i_r].
\]
At each recursion depth, the total number of candidate source indices
examined is $O(K)$, and there are $O(\log K)$ depths. Hence all
transitions through one physical edge can be computed in
$O(K\log K)$ time using this simple implementation, giving an overall
hop-indexed complexity of
\begin{equation}
    O\!\left(L|\mathcal E|K\log K\right).
\end{equation}
The SMAWK linear-time Monge matrix-search algorithm~\cite{aggarwal1986geometric} further
reduces each edge relaxation to $O(K)$ and the overall complexity to
$O(L|\mathcal E|K)$.

\section{Additional Experiments}
\label{app:experiments}
\end{document}